\pgfplotsset{width=10cm,compat=1.9}
\newtheorem{theorem}{Theorem}
\newtheorem{lemma}{Lemma}
\newtheorem{corollary}{Corollary}
\newtheorem{claim}{Claim}
\newtheorem{observation}{Observation}
\newtheorem{proposition}{Proposition}
\newtheorem{example}{Example}
\newcommand{\eps}{\varepsilon}
\newcommand{\demand}{\mathcal{D}}
\newcommand{\critical}{\mathcal{C}}
\newcommand{\reals}{\mathbb{R}}
\DeclareMathOperator*{\argmax}{arg\,max}
\newcommand{\dmin}{{\delta_{\min}}}
\newcommand{\dmax}{{\delta_{\max}}}
\def\multiset#1#2{\ensuremath{\left(\kern-.3em\left(\genfrac{}{}{0pt}{}{#1}{#2}\right)\kern-.3em\right)}}
\definecolor{darkpastelgreen}{rgb}{0.01, 0.75, 0.24}
\definecolor{darkorchid}{rgb}{0.6, 0.2, 0.8}
\definecolor{blue(pigment)}{rgb}{0.2, 0.2, 0.6}
\title{Combinatorial Contracts Beyond Gross Substitutes\thanks{In concurrent work, Deo-Campo Vuong et al.~\cite{VuongEtAl23} also present an algorithm for enumerating all critical values, and use it to obtain an efficient algorithm for finding the optimal contract for supermodular rewards and additive costs. Their work was uploaded to arXiv on August 14, 2023. 
This project has received funding from the European Research Council (ERC) under the European Union's Horizon 2020 research and innovation program (grant agreement No. 866132), by the Israel Science Foundation (grant number 317/17), by an Amazon Research Award, and by the NSF-BSF (grant number 2020788).}}
\author{
Paul Dütting \thanks{Google Research, Zurich, Switzerland. Email: \texttt{duetting@google.com}} 
\;\;\; 
Michal Feldman
\thanks{Tel Aviv University, Israel. Email: \texttt{mfeldman@tauex.tau.ac.il}}
\;\;\;
Yoav Gal Tzur
\thanks{Tel Aviv University, Israel. Email: \texttt{yoavgaltzur@mail.tau.ac.il}
}
}
\date{}
\begin{document}

\maketitle

\begin{abstract}
We study the combinatorial contracting problem of D\"utting et al.~\cite{dutting2022combinatorial}, in which a principal seeks to incentivize an agent to take a set of costly actions. In their model, there is a binary outcome (the agent can succeed or fail), and the success probability and the costs depend on the set of actions taken. The optimal contract is linear, paying the agent an $\alpha$ fraction of the reward. 
For gross substitutes (GS) rewards and additive costs, they give a poly-time algorithm for finding the optimal contract.
They use the properties of GS functions to argue that there are poly-many  ``critical values" of $\alpha$, and that one can iterate through all of them efficiently in order to find the optimal contract.
 
In this work we study to which extent GS rewards and additive costs constitute a tractability frontier for combinatorial contracts. 
We present an algorithm that for {\em any} rewards and costs, enumerates all critical values, with poly-many demand queries (in the number of critical values). This implies the tractability of the optimal contract for any setting with poly-many critical values  and efficient demand oracle. A direct corollary is a poly-time algorithm for the optimal contract in settings with supermodular rewards and submodular costs.
We also study a natural class of matching-based instances with XOS rewards and additive costs. 
While the demand problem for this setting is tractable, we show that it admits an exponential number of critical values. On the positive side, we present (pseudo-) polynomial-time algorithms for two natural special cases of this setting. Our work unveils a profound connection to sensitivity analysis, and designates matching-based instances as a crucial focal point for gaining a deeper understanding of combinatorial contract settings.
\end{abstract}

\section{Introduction}

Contract theory is one of the pillars of microeconomic theory. In its most prominent model of Holmström \cite{holmstrom1979moral} and Grossman and Hart \cite{grossman1992analysis}, a principal seeks to delegate the execution of a task to an agent. 
The agent may perform one of $n$ costly actions $A = \{1,...,n\}$, unobservable by the principal. Each action leads to a stochastic reward (outcome) for the principal. 
The basic incentive problem is that the agent, who bears the cost of executing the task, does not directly benefit from the stochastic reward that is generated.
To incentivize the agent to exert effort, the principal designs a contract that specifies the payments for each outcome.
This induces a game between the principal and the agent. Given a contract, the agent replies with an action that maximizes payment minus cost. The principal in turn tries to find the contract that maximizes reward minus payment, given the agent's best response.

In the vanilla contract problem, where the agent may only choose a single action, the optimal contract can be computed in polynomial time:
This is done by computing for each action individually, using a linear program, the minimum payment required to incentivize this action (if at all possible), then pick the best action among those \cite{grossman1992analysis}.

In many real-life settings, the 
contract problem is combinatorial in one or more dimensions of the problem. 
In such settings, the approach above might not yield an efficient algorithm. 
A recent line of work embarks on the design of efficient algorithms for combinatorial settings \cite{duetting2022multi,dutting2022combinatorial, babaioff2006combinatorial, dutting2021complexity}.

\paragraph{The Model.}
We study the model introduced by \cite{dutting2022combinatorial} where the principal seeks to hire an agent to run a project, and the agent can perform \textit{any subset} of actions from a fixed set $A = \{1,2,...,n\}$.
The project can either succeed or fail. If the project succeeds, the principal gets a reward, which we normalize to be $1$, otherwise, the reward is $0$.
Each subset of actions $S$ induces a success probability of the project, specified by a set function $f: 2^A \to [0,1]$. Since $f(S)$ is also the expected reward of a set of actions, we also refer to $f$ as the reward function. The cost of a set is given by a set function
$c:2^A \to \reals_{\ge 0}$.
We assume to have value oracle access to both $f$ and $c$.

For this binary setting, the optimal contract takes a linear form, where the principal pays a fraction $\alpha \in [0,1]$ of her reward if the project succeeds and nothing otherwise \cite{carroll2015robustness}.
In this case, the agent picks a set of actions $S \subseteq A$ which maximizes $\alpha f(S) - c(S)$, and the principal chooses a contract $\alpha$ such that $(1-\alpha) f(S)$ is maximized --- with $S$ being the agent's best response to contract $\alpha$.
Motivated by a connection to combinatorial auctions, interpreting $f(S)$ as the value for a set of items and $\frac{1}{\alpha}c(S)$ as (bundle) prices, we will also refer to the agent's best response for a given $\alpha$, as the agent's \emph{demand}. 

Values of $\alpha$ at which the agent's demand changes are called \emph{critical values}, and it is easy to observe that the optimal contract $\alpha$ occurs at a critical value.

\paragraph{State of the Art.}

The main result of \cite{dutting2022combinatorial} 
is a polynomial time algorithm when the success probability function $f$ is gross substitutes (GS), and the cost function $c$ is additive.

To prove this result the authors of \cite{dutting2022combinatorial}  show that for GS $f$ and additive $c$ there can be at most $O(n^2)$ critical values of $\alpha$. In addition, they exploit that for GS valuations and item prices the agent's demand problem can be solved greedily \cite{PaesLeme17}, to obtain an efficient algorithm that given a critical value of $\alpha$ finds the next critical value $\alpha' > \alpha$. Thus, the optimal contract can be solved by iterating efficiently through critical values of $\alpha$, and choosing the value of $\alpha$ that maximizes the principal's utility. 
Crucially, the properties of GS are used both to provide an upper bound on the number of critical values, and to iterate efficiently through these values.

For the case in which $f$ is submodular and $c$ is additive, the authors of \cite{dutting2022combinatorial} show that there may be exponentially many critical values of $\alpha$. Furthermore, they show that computing the optimal contract for this case is NP-hard.

\subsection{Our Contribution}

In this work, we explore to which extent gross substitutes success probability $f$ and additive costs $c$ constitutes a tractability frontier for combinatorial contracts.

\paragraph{Enumerating All Critical Values with a Demand Oracle.}

As a first result, we present an algorithm that for \emph{any} $f$ and $c$ finds all critical values of $\alpha$, using polynomially (in the number of critical values) many calls to an algorithm for the agent's demand problem.

\vspace{0.1in}
\noindent \textbf{Proposition} (\Cref{prop:optContractSuff})\textbf{.} Given oracle access to the agent's demand, there exists an algorithm which returns all the critical values. The running time of this algorithm is linear in the number of critical values.
\vspace{0.1in}

An immediate algorithmic implication of this result is an efficient algorithm for the optimal contract problem in scenarios with a small (i.e., polynomial) number of critical values, given access to a demand oracle. In scenarios with a poly-time algorithm for a demand query, this gives a poly-time algorithm for the optimal contract.

Our algorithm is recursive. It finds the critical values by querying the agent's demand oracle at two end points of a given segment $[\alpha, \alpha'] \subseteq [0,1]$. If the agent's best response for the two contracts is the same, i.e., $S_\alpha = S_{\alpha'}$, there are no critical values in $(\alpha,\alpha']$.
Otherwise, the oracle answers differ, i.e., $S_\alpha \ne S_{\alpha'}$, and the procedure is applied recursively for the sub-segments $[\alpha, \gamma]$ and $[\gamma,\alpha']$, where $\gamma$ is the contract for which the agent is indifferent between $S_\alpha$ and $S_{\alpha'}$. That is, $\gamma$ is such that $\gamma f(S_\alpha) - c(S_\alpha) = \gamma f(S_{\alpha'}) - c(S_{\alpha'})$.

This algorithm has been discovered before in different contexts, in particular in the field of sensitivity analysis of combinatorial optimization problems \cite{gusfield1980sensitivity}, where it is referred to as the Eisner-Severance technique \cite{eisner1976mathematical}.
We use this algorithm, and the connection to sensitivity analysis more generally, to make progress on the tractability frontier for optimal contracts beyond gross substitutes rewards and additive costs.

\paragraph{Complementary Actions.}
A first setting for which we show the tractability of the optimal contract is the case of supermodular $f$ and submodular $c$. This captures complementary actions with non-decreasing marginal rewards and non-increasing marginal costs.

Complementary actions appear in many real-world applications, specifically when the agent can perform actions that are infrastructural in nature (e.g., buying a computer, learn a new technology) and the infrastructure can be used for various purposes.

In this setting, the agent's demand problem can be solved efficiently using value queries (via reduction to submodular minimization). 
We show that the number of critical values in this setting is at most $n$. In combination with \Cref{prop:optContractSuff} we thus obtain:

\vspace{0.1in}
\noindent \textbf{Theorem} (\Cref{thm:submodular-costs})\textbf{.} If the success probability function is monotone supermodular, and the cost function is monotone submodular, the optimal contract can be computed in polynomial time.

\paragraph{Matching-based Rewards and Costs.}

In the second setting that we consider, actions are edges in a bipartite graph $G = (V \cup U, A)$. 
We think of the right-hand side vertices $u \in U$ as resources, and the left-hand side vertices $v \in V$ as tasks. 
Every matching in $G$ is associated with a success probability, and the success probability for a set of edges $S$ is the maximal success probability of a matching contained in $S$. Similarly, each set of edges $S$ (independent of whether it forms a matching or not) is associated with a cost.  

A natural starting point, and the focus of this work, is where each edge is associated with a cost $c_{v,u}$ and a success probability $f_{v,u}$. The cost for a set of edges $S$ is given by the sum of costs $c_{v,u}$ of $(v,u) \in S$. The success probability of a set of edges $S$ is the max-weight matching in the subgraph induced by $S$, with respect to weights $\{f_{v,u}\}_{(v,u)\in A}$. 
In this case, the success probability $f(S)$ is XOS, but not necessarily submodular.\footnote{For each matching $M 
\subseteq A$ in the graph $G$, let $x_M \in \mathbb{R}_{\ge 0}^{|A|}$ be its corresponding vector, with $x_M(a) = f_a$ if $a \in M$ and $0$ otherwise. Then $f$ is XOS, as $f(S)=
max_{M} x_M \cdot \chi_S$, where $\chi_S$ is the characteristic vector of the set $S$.}

The agent's best response in this setting corresponds to a max-weight matching computation, so a demand oracle can be computed efficiently. 
However, as our main technical result, we show a super-polynomial lower bound on the number of critical values. 
We prove this lower bound through a non-trivial reduction from the parametric shortest $s-t$ path problem in sensitivity analysis, and a family of hard instances due to Carstensen \cite{carstensen1983complexity} and Mulmuley and Shah \cite{mulmuley2000lower}, to an instance of the agent's best response problem.

\vspace{0.1in}
\noindent \textbf{Theorem} (\Cref{thm:superPolyCVs})\textbf{.}
For matching based rewards and additive costs, 
the number of critical values may be super-polynomial.
\vspace{0.1in}

We also show tractability of the optimal contract for two important special cases. 
First, we utilize the work of Carstensen \cite{carstensen1983complexity} to derive a quasi-polynomial upper bound on the number of critical values where costs and rewards are integers (up to scaling required for normalization).

\vspace{0.1in}
\noindent \textbf{Proposition} 
(\Cref{cor:MathcingQuasiPoly})\textbf{.}
For matching based rewards and additive costs, when costs and rewards are integers,  
there is a quasi-polynomial time algorithm for finding the optimal contract.
\vspace{0.1in}

We also give a polynomial-time algorithm for the case where the cost of each edge depends only on one side of the graph, i.e., 
all edges incident to the same vertex in, say $U$, have the same cost.
We prove this result by a non-immediate reduction to the case of gross substitutes $f$ \cite{dutting2022combinatorial}. 
The crux of this argument is that while $f$ when viewed as a function of a subset of edges is not gross substitutes (and not even submodular), in this special case it's possible to reformulate the agent's utility problem in terms of the vertices of $U$ with additive costs and a Rado (and hence gross substitutes) reward function (see \Cref{sec:model} for definitions).

\vspace{0.1in}
\noindent \textbf{Theorem} 
(\Cref{thm:oneSidedCosts})\textbf{.}
For matching based rewards and additive costs when the costs depend only on one side, the optimal contract can be computed in polynomial time.
\vspace{0.1in}

Our work unveils a profound connection between contracts and sensitivity analysis. It also designates matching-based instances as a pivotal focal point beyond gross substitutes (and even beyond submodular) for further research in this field.  
This is because they admit an efficient demand oracle, but at the same time they admit exponentially many critical values (as we show in \Cref{thm:superPolyCVs}). Our work thus opens up two exciting directions for future work: either (1) finding a poly-time algorithm for computing the optimal contract (which would inevitably have to sidestep enumerating all critical values), or (2) showing a hardness result (which would necessarily not rely on the hardness of demand queries). Either of these solutions would contribute to a deeper comprehension of the overall landscape.

\subsection{Related Work}
The classic hidden-action principal agent problem was formulated by Holmstr\"om \cite{holmstrom1979moral} and Grossman and Hart \cite{grossman1992analysis}, who also gave a poly-time algorithm for finding the optimal contract based on linear programming. For the types of structured contracting settings as we study them in this work, this approach however does not yield a poly-time algorithm. 
Holmstr\"om and Milgrom \cite{holmstrom1991multitask} consider a dynamic contracting setting, and show that linear contracts are optimal in that setting.
In more recent work, Carroll \cite{carroll2015robustness} shows the (max-min) robust optimality of linear contracts. In his model, the principal knows a fixed subset of the agent's actions and show that the max-min optimal contract is linear.
Dütting et al. \cite{dutting2019simple} show the max-min optimality of linear contracts also hold when the principal is aware of the agent's actions, but only the expected value of the outcome distributions are known. They also discuss how well linear contracts approximate the optimal one. In our model with binary outcomes linear contract are optimal.

 An exponential blow-up in complexity also naturally arises when the principal seeks to hire a \textit{team of agents}. Babaioff et al. \cite{babaioff2006combinatorial} introduced the combinatorial agency model, in which a team of agents, each chooses whether to exert effort or note, is incentivized by the principal. The outcome depends on the different combination of effort-making agents, whose structure is also expressed by a set function.
In later work Babioff et al. \cite{babaioff2009free}, \cite{babaioff2006mixed} study the effects of mixed strategy and free-riding in this setting.
These earlier works focused on the case where the success probability is encoded as a read-once network. The recent work of D\"utting et al. \cite{duetting2022multi} studies complement free reward functions -- XOS, submodular and subadditive. 
The key result of that work are poly-time algorithms for XOS and submodular rewards that yield $O(1)$-approximations to the optimal contract with poly-many demand and value oracle calls.
A related but different multi-agent contracting problem was recently studied by Castiglioni et al.~\cite{CastiglioniM023}, who consider the case where the principal can observe each agent's individual outcome (and base her contracts on this information).

The exponential blow-up can stem also from a \emph{structured outcome space}, which is studied by D\"utting et al. \cite{dutting2021complexity}. 
They consider scenarios with a larger outcome space (of size $m$), whose description size is $\textsf{poly}(n,\log(m))$. The goal is to find an optimal or near-optimal contract in time polynomial in the scuccinct description of the instance.

Ho et al. \cite{ho2014adaptive}, Zhu et al.~\cite{ZhuBYWJJ23}, and D\"utting et al.~\cite{DuettingGuruganeshSchneiderWang23} consider a contract design question from an online learning perspective. Kleinberg and Kleinberg \cite{kleinberg2018delegated} and Kleinberg and Raghavan \cite{kleinberg2020classifiers} consider problems that can
be thought of as contract design without money.
Guruganesh et al. \cite{guruganesh2021contracts,GuruganeshSW023}, Castiglioni et al. \cite{CastiglioniM021,castiglioni2022designing} and Alon et al. \cite{alon2021contracts,AlonDLT23} consider settings in which screening (that is, hidden types) is combined with moral hazard (hidden actions).

\section{Model and Notation}\label{sec:model}
We consider a principal-agent setting, where the agent can choose any subset of a given action set $A = \{1,2,...,n\}$, and the outcome is binary (success or failure).
A success outcome gives the principal a reward $r \in \reals_{\geq 0}$ (and a failure gives the principal zero reward).
Every set of actions $S \subseteq A$ is associated with a cost (incurred by the agent) and a success probability, which are given by set functions $c:2^A \to \reals_{\ge 0}$ and $f:2^A \to [0,1]$, respectively. 

The principal cannot observe the agent's actions, only the final outcome.
Thus, the principal incentivizes the agent to take costly actions by offering a contract $t: \{0,1\} \to \reals_{\ge0}$, which specifies a payment for every outcome. 
As standard in the literature, we impose a limited liability constraint, meaning that payments go only from the principal to the agent.
The agent observes the contract $t$, and responds by taking a set of actions $S$.

The principal's goal --- our goal in this paper --- is to find the contract that maximizes her utility, given that the agent responds with the action set that maximizes the agent's utility. This contract is referred to as the \emph{optimal contract}. 

In this binary outcome setting, it is without loss of generality to assume that $t(0)=0$  \cite{dutting2022combinatorial}.~Furthermore, we normalize the reward in case of a success to $1$, i.e., $r=1$, so $f(S)$ is the expected reward for a set of actions $S$. In this normalized setting, we often refer to $f$ as the \emph{reward} function.
A contract can now be described using a single parameter $\alpha \in [0,1]$, which is the fraction of the reward ($r=1$) that is paid to the agent, that is, $t(1) = \alpha$. 
The agent and principal's utilities are then given by:
\[
u_a(\alpha, S) = \alpha f(S) - c(S) \quad \text{and} \quad
u_p(\alpha, S) = (1-\alpha)f(S).
\]
Given a contract $\alpha$, the agent's \emph{best response}, is a set of actions $S^* \subseteq A$ 
that maximizes $u_a(\alpha, S)$ among all $S \subseteq A$. 
As standard in the literature, we assume that the agent breaks ties in favor of the principal.
Namely, for any two sets $S,S'$ such that $u(\alpha,S)=u(\alpha,S')$, the agent prefers the set with the higher value of $f$.

Drawing from the combinatorial auctions literature, we refer to the set of sets that maximize the agent's utility under a given contract $\alpha$ as the agent's \emph{demand} under this contract. By employing a consistent tie-breaking rule, we can narrow this collection down to a single set, $S_\alpha$. 
We can now write the agent and principal's utilities under a contract $\alpha$ as $u_a(\alpha) = u_a(\alpha, S_\alpha)$ and $u_p(\alpha) = u_p(\alpha, S_\alpha)$.

\paragraph{Classes of Set Functions.}
For any set function $f:2^A \to \reals$, we denote by $f(a \mid S)$ the \emph{marginal value} of $a \in A$ given the set $S \subseteq A$, that is $f(a\mid S) = f(S \cup \{a\}) - f(S)$.
We consider the following set functions.
\begin{itemize}
    \item $f$ is \emph{supermodular} if for every two sets $S,T \subseteq A$ such that $S\subseteq T$ and every $a \in A$, $f(a \mid S) \le f(a \mid T)$.

    \item $f$ is \emph{XOS} if it is a maximum over additive function, i.e. there exists $f_1,...,f_k \in \reals^{|A|}$ such that for any $S \subseteq A$, $f(S) = \max_i f_i\cdot \chi_S$, where $\chi_S \in \{0,1\}^{|A|}$ is the characteristic vector of $S$.
    
    \item $f$ is \emph{submodular} if for every two sets $S,T \subseteq A$ such that $S\subseteq T$ and every $a \in A$, $f(a \mid S)\ge f(a \mid T)$. Note that $f$ is submodular if and only if $-f$ is supermodular.
    
    \item $f$ is \emph{gross-substitutes (GS)} if for any two price vector $p,q \in \reals_{\ge 0}^{|A|}$, with $p \le q$ (point-wise), For every set $S \in \demand(p)$ there exists a set $T \in \demand(q)$ such that $T$ contains every action $a$ such that $p_a=q_a$. Where $\demand(p) = \max_{S\subseteq A} f(S) - p\cdot \chi_S$.\footnote{In auction literature this is known as the demand for $f$ under price $p$.} 
    
    \item  $f:2^U \to \reals_{\ge 0}$ is \emph{Rado} if there is a bipartite graph $G(V \cup U, A)$ with non-negative weights on the edges, and a matroid $M = (V, I)$, such that for every $T \subseteq U$, $f(T)$ is the total weight of the max weight matching on the subgraph induced by $T$ and a subset of $V$ that belongs to $I$.
\end{itemize}

The last four classes are part of the complement-free hierarchy \cite{lehmann2001combinatorial}, where containment is strict (i.e., 
$ Rado \subset GS \subset Submodular \subset XOS$).

\subsection{Structure of Optimal Contracts}
We begin by making a few observations on the structure of the utilities of both the agent and the principal. 
For a fixed set of actions $S \subseteq A$, the agent's utility, $u_a(\alpha, S) = \alpha f(S) - c(S)$, is an affine function of $\alpha$.
Thus, the utility of the agent 
$
u_a(\alpha) = \alpha f(S_\alpha) - c(S_\alpha)
$, 
is the upper envelope of these $2^n$ affine functions. As such, it is continuous, monotonically non-decreasing, convex and piece-wise linear, as illustrated in Figure~\ref{fig:ex1Util}.

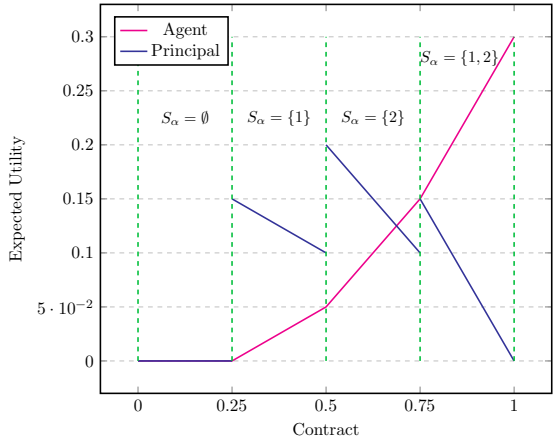
\begin{figure}[H]
\centering
\begin{tikzpicture}[scale=0.6]
\begin{axis} 
[   
    scale only axis,
    xlabel={Contract},
    ylabel={Expected Utility},
    xtick={0, 0.25, 0.5, 0.75, 1},
    ytick={0, 0.05, 0.1, 0.15, 0.2, 0.25, 0.3},
    legend pos=north west,
    ymajorgrids=true,
    grid style=dashed,
    line width=1pt
]
\addplot[color=magenta]
    coordinates{(0,0)(0.25,0)(0.5,0.05)(0.75,0.15)(1,0.3)};
\addplot[color=blue(pigment)] coordinates{(0,0)(0.25,0)};
\addplot[color=blue(pigment)] coordinates{(0.25,0.15)(0.5,0.1)};
\addplot[color=blue(pigment)] coordinates{(0.5,0.2)(0.75,0.1)};
\addplot[color=blue(pigment)] coordinates{(0.75,0.15)(1,0)};

\addplot[darkpastelgreen, dashed, fill opacity=1] coordinates{(0,0)(0,0.3)};
\addplot[color=darkpastelgreen, dashed, fill opacity=1] coordinates{(0.25,0)(0.25,0.3)};
\addplot[color=darkpastelgreen, dashed, fill opacity=1] coordinates{(0.5,0)(0.5,0.3)};
\addplot[color=darkpastelgreen, dashed, fill opacity=1] coordinates{(0.75,0)(0.75,0.3)};
\addplot[color=darkpastelgreen, dashed, fill opacity=1] coordinates{(1,0)(1,0.3)};

\node[] at (12.5,225) {\small $S_\alpha = \emptyset$};
\node[] at (37.5,225) {\small $S_\alpha = \{1\}$};
\node[] at (62.5,225) {\small $S_\alpha = \{2\}$};
\node[] at (85.5,280) {\small $S_\alpha = \{1,2\}$};

\legend{Agent, Principal}
\end{axis}
\end{tikzpicture}
\caption{\footnotesize The utilities of the principal and the agent as a function of $\alpha$ (critical values marked in green). $A = \{1,2\}$,
the rewards are $f(\emptyset)=0$,
$f(\{1\})=0.2$, $f(\{2\})=0.4$, $f(\{1,2\})=0.6$
and the costs $c(\emptyset)=0,$
$c(\{1\})=0.05$, $c(\{2\})=0.15$, $c(\{1,2\})=0.3$.}
\label{fig:ex1Util}
\end{figure}

The values of $\alpha$ at which the agent's demand changes are called \textit{critical values}; the set of critical values is denoted by 
$\critical_{f,c}$.

For every interval of $\alpha$ between two critical values, the agent's demand is fixed at some set $S$,
and the principal's utility is $u_p(\alpha) = (1-\alpha)f(S)$. 
Thus, the principal's utility function is linear-in-parts, and the best contract within each such interval is its left-most point (which is a critical value).
Indeed, within the interval, the agent's demand does not change, and $\alpha$ increases, thus the principal's utility $u_p(\alpha) = (1-\alpha)f(S)$ decreases.
This immediately implies that the optimal contract must be a critical value.
This is cast in the following observation.

\begin{observation} [Dütting et al.~\cite{dutting2022combinatorial}]\label{obs:critVals}
    For every cost and reward functions 
    $c: 2^A \to \reals_{\ge 0}$ and $f: 2^A \to [0,1]$, there exist $k \le 2^n + 1$ critical values, $0=\alpha_0 < \alpha_1 < \hdots < \alpha_k \le 1$, such that 
    \begin{enumerate}
        \item For every contract $x \in [0,1]$, $S_x = S_{\alpha_{i(x)}}$, where $i(x) = \argmax_{j}\{\alpha_j \mid \alpha_j \le x\}$.
        \item For every $0 \le i < j \le k$, $f(S_{\alpha_i}) < f(S_{\alpha_j})$. 
        Furthermore, $c(S_{\alpha_i}) < c(S_{\alpha_j})$ under consistent tie-breaking.
        \item The optimal contract, $\alpha^* = \argmax_{\alpha \in [0,1]} u_p(\alpha^*)$, satisfies $\alpha^* \in \{\alpha_0,\alpha_1,...,\alpha_k\} = \critical_{f,c}$. 
    \end{enumerate}
\end{observation}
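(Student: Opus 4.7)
The plan is to view the agent's utility $u_a(\alpha)=\max_{S\subseteq A}(\alpha f(S)-c(S))$ as the upper envelope of the $2^n$ affine functions $g_S(\alpha):=\alpha f(S)-c(S)$ of $\alpha$, and read off the three statements from the resulting piecewise-linear picture. Since each $g_S$ is affine with slope $f(S)\ge 0$, the pointwise maximum $u_a$ is continuous, convex, monotonically non-decreasing, and piecewise linear. Let $0=\alpha_0<\alpha_1<\dots<\alpha_k$ be the break-points at which the maximizing set changes; since there are at most $2^n$ distinct sets, $k+1\le 2^n+1$.

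For part~(1), on each open interval $(\alpha_i,\alpha_{i+1})$ (and on $(\alpha_k,1]$) a unique $S$ strictly maximizes $g_S$, so by definition this $S$ equals $S_{\alpha_i}$, and on this interval the agent's demand is constant. At the break-point $\alpha=\alpha_{i+1}$ both $S_{\alpha_i}$ and $S_{\alpha_{i+1}}$ attain $u_a(\alpha_{i+1})$. By convexity the right-hand slope strictly exceeds the left-hand slope, i.e.\ $f(S_{\alpha_{i+1}})>f(S_{\alpha_i})$, so the consistent tie-breaking rule (higher $f$ wins) selects $S_{\alpha_{i+1}}$. Thus the demand is constant on each half-open interval $[\alpha_i,\alpha_{i+1})$ and equals $S_{\alpha_i}$, giving $S_x=S_{\alpha_{i(x)}}$ for all $x\in[0,1]$.

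For part~(2), the strict inequality $f(S_{\alpha_i})<f(S_{\alpha_j})$ for $i<j$ is exactly the convexity statement that the slopes of consecutive linear pieces strictly increase, applied iteratively. For costs, I would use the indifference identity at $\alpha_{i+1}$,
\[
\alpha_{i+1}f(S_{\alpha_i})-c(S_{\alpha_i})=\alpha_{i+1}f(S_{\alpha_{i+1}})-c(S_{\alpha_{i+1}}),
\]
which rearranges to $c(S_{\alpha_{i+1}})-c(S_{\alpha_i})=\alpha_{i+1}\bigl(f(S_{\alpha_{i+1}})-f(S_{\alpha_i})\bigr)>0$, since $\alpha_{i+1}>\alpha_0=0$ and $f$ is strictly increasing along the critical values; chaining these inequalities along $i<j$ yields $c(S_{\alpha_i})<c(S_{\alpha_j})$.

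For part~(3), on $[\alpha_i,\alpha_{i+1})$ (and on $[\alpha_k,1]$) the demand is constant at $S_{\alpha_i}$ by part~(1), so the principal's utility $u_p(\alpha)=(1-\alpha)f(S_{\alpha_i})$ is a weakly decreasing linear function of $\alpha$, maximized at the left endpoint $\alpha_i$. Taking the best over all such intervals places the optimum at some $\alpha^\star\in\{\alpha_0,\dots,\alpha_k\}=\critical_{f,c}$.

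\textbf{Main obstacle.} There is no deep difficulty; the only subtlety is handling the tie-breaking correctly, which is what makes the demand constant on the half-open intervals $[\alpha_i,\alpha_{i+1})$ and in turn forces the strict increase of $c$ along critical values and the attainment of the principal's maximum at a critical value rather than merely in its closure.
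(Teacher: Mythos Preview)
Your proposal is correct and mirrors the paper's own treatment: the paper does not give a separate formal proof of this observation (it is attributed to \cite{dutting2022combinatorial}), but the surrounding discussion before the observation establishes exactly your upper-envelope picture for part~(1) and part~(3), and Lemma~\ref{lem:utilityConvex} proves part~(2) via essentially the same optimality/indifference comparison you use. The only cosmetic difference is that the paper proves $c(S_{\alpha_i})<c(S_{\alpha_j})$ by contradiction rather than through your direct rearrangement of the indifference identity at $\alpha_{i+1}$, but the arguments are equivalent.
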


\Cref{obs:critVals} simplifies the optimal contract problem as it allows us to restrict attention to the set of critical values. 
And yet, even with this restriction, the problem is far from being trivial, due to two main challenges. 
First, an instance may admit exponentially many critical values. 
Second, unlike the agent's utility, the principal's utility is not well-behaved. For example, it may be neither monotone nor continuous in $\alpha$. 

Note also that the cost and reward functions, $c$ and $f$, are set functions of exponential size (in the number of actions $n$). 
We consider two oracle models to these functions, named value and demand oracles. 
A value oracle for $f$ (respectively, for $c$) receives a set $S\subseteq A$ and returns $f(S)$ (resp., $c(S)$). 
A demand oracle for $f,c$ receives a contract $\alpha \in [0,1]$ and returns the agent's best response, $S_\alpha$.
Note that this corresponds to a demand query with bundle prices, where the value function is $f(S)$ and the price of a bundle $S$ is $c(S)/\alpha$. In the case where $c$ is additive we obtain the canonical definition of a demand query for $f$ with item prices.

\subsection{A General Scheme for the Optimal Contract}
Following the above observations, the authors in \cite{dutting2022combinatorial} suggest a scheme for computing the optimal contract (Algorithm 4.1 in their paper): Going over all critical values of $\alpha$ and pick the one that maximizes the principal's utility. This algorithm runs in polynomial time if the following three conditions hold:
\begin{enumerate}
    \item \label{prop1} The number of critical values, $|\critical_{f,c}|$, is polynomial in $n$.
    \item \label{prop2} For every $\alpha \in [0,1]$ there is an efficient algorithm for computing the agent's demand.
    \item \label{prop3} For every $\alpha \in [0,1]$ there is an efficient algorithm for finding the next critical value.
\end{enumerate}

This paradigm is used in \cite{dutting2022combinatorial} to find the optimal contract when rewards are gross substitutes and costs are additive. For establishing property \ref{prop3}, the authors of \cite{dutting2022combinatorial} rely on the greedy implementation of the agent's demand oracle \cite{Bertelsen05,PaesLeme17}.
For the larger class of submodular rewards, they exploit that the demand oracle problem is NP-hard (i.e., property \ref{prop2}.~is violated), to show that the problem of finding the optimal linear contract is NP-hard. They also give a submodular function for which there are exponentially many critical values (i.e., property \ref{prop1}.~is violated).

\section{Enumerating all Critical Values}
In this section we present an efficient algorithm for finding all critical values. That is, given oracle access to the agent's demand, the algorithm returns all values of $\alpha$ at which the agent's demand changes. The running time of the algorithm is linear in the number of critical values.
The algorithm doesn't assume anything regarding the structure of the agent's demand problem, nor the implementation of its oracle.

This algorithm has been discovered several times in different contexts. In particular, it has been used by Gusfield \cite{gusfield1980sensitivity} in the context of sensitivity analysis, where it is referred to as the Eisner-Severance technique \cite{eisner1976mathematical}. Here the problem is, given a parameterized combinatorial optimization problem, find all the breakpoints where the optimal solution changes. These breakpoints correspond to the critical values in our contract problem.

We thus show:

\begin{proposition}\label{prop:optContractSuff}
    Let $c:2^A \to \reals_{\ge 0}$ and $f:2^A \to [0,1]$ be two functions that describe the cost and reward of a given set of actions, such that $|\critical_{f,c}|$ is polynomial in $|A|=n$. Assume further that there exists an efficient algorithm to compute the agent's demand, then the optimal contract can be computed efficiently.
\end{proposition}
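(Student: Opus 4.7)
The plan is to combine two components: an enumeration procedure (the Eisner--Severance technique) that, using the demand oracle as a black box, returns all elements of $\critical_{f,c}$ with $O(|\critical_{f,c}|)$ oracle calls, and then invoke \Cref{obs:critVals} to pick the critical value maximizing $u_p$.

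The enumeration algorithm is recursive on an interval $[\alpha,\alpha']\subseteq[0,1]$, initially $[0,1]$. At each call, I query the demand oracle at the two endpoints to get $S_\alpha$ and $S_{\alpha'}$. If $S_\alpha=S_{\alpha'}=:S$, I claim the demand is constant on $[\alpha,\alpha']$ and the recursion stops. To see this, note that the affine function $\beta\mapsto \beta f(S)-c(S)$ agrees with $u_a(\beta)$ at both endpoints; since $u_a$ is convex (it is the upper envelope of affine functions, as noted before \Cref{obs:critVals}) and the affine function lies everywhere below $u_a$, the two must coincide on the whole segment, and consistent tie-breaking forces $S_\beta = S$ throughout. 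If instead $S_\alpha\ne S_{\alpha'}$, I compute the unique indifference point
\[
\gamma \;=\; \frac{c(S_{\alpha'})-c(S_\alpha)}{f(S_{\alpha'})-f(S_\alpha)},
\]
which is well-defined because part 2 of \Cref{obs:critVals} gives $f(S_{\alpha'})>f(S_\alpha)$. The optimality of $S_\alpha$ at $\alpha$ and of $S_{\alpha'}$ at $\alpha'$ yields, via two one-line inequalities on $u_a$, that $\alpha<\gamma<\alpha'$. I then recurse on $[\alpha,\gamma]$ and $[\gamma,\alpha']$ and return the union of the discovered critical values (together with $\gamma$ if it turns out to be one).

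For the running-time bound I would look at the recursion tree. Every internal node splits at a point $\gamma$ that, by the argument above, strictly separates the endpoints, so each internal node is associated with a witness interval containing at least one element of $\critical_{f,c}$ in its interior; a standard charging argument (each critical value is charged to the deepest internal node whose split lies at or immediately surrounds it) shows the number of nodes, and hence of demand queries, is $O(|\critical_{f,c}|)$. Once all critical values are in hand, I query the demand at each one more time, evaluate $u_p(\alpha)=(1-\alpha)f(S_\alpha)$, and return the maximizer; part 3 of \Cref{obs:critVals} guarantees this is the optimal contract. The overall running time is polynomial in $n$ and $|\critical_{f,c}|$ times the cost of one demand query, which is the claim.

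The main obstacle, and the only non-mechanical step, is establishing that the recursion makes genuine progress: that $\gamma$ lies strictly inside $(\alpha,\alpha')$ and that when $S_\alpha=S_{\alpha'}$ there really are no critical values in $(\alpha,\alpha')$ (not merely no critical value at the queried points). Both reduce to the convexity/affine-envelope structure of $u_a$ combined with consistent tie-breaking; once these are nailed down, correctness, termination, and the linear-in-$|\critical_{f,c}|$ query bound all follow from the tree analysis sketched above.
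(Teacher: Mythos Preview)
Your approach is the same as the paper's (the Eisner--Severance recursion followed by \Cref{obs:critVals}), but there is a real gap in your termination argument. You assert that the two ``one-line inequalities'' give $\alpha<\gamma<\alpha'$ strictly. The lower bound is fine: if $\gamma=\alpha$ then $S_\alpha$ and $S_{\alpha'}$ are tied at $\alpha$, and tie-breaking toward higher $f$ would force $S_\alpha=S_{\alpha'}$, a contradiction. The upper bound, however, is \emph{not} strict in general: if $\gamma=\alpha'$ then $S_\alpha$ and $S_{\alpha'}$ are tied at $\alpha'$, and tie-breaking toward higher $f$ selects $S_{\alpha'}$---which is exactly what happened, so there is no contradiction. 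Concretely, take $A=\{1\}$, $f(\emptyset)=0$, $f(\{1\})=1$, $c(\emptyset)=0$, $c(\{1\})=1$; then $S_0=\emptyset$, $S_1=\{1\}$, and $\gamma=1=\alpha'$. Your recursion on $[\alpha,\gamma]=[0,1]$ is the original call again, and the procedure loops forever. More generally, this occurs whenever the right endpoint of the current interval is itself the unique critical value in $(\alpha,\alpha']$.

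The paper's algorithm sidesteps this with one extra line: after computing $\gamma$ it queries $S_\gamma$, and if $S_\gamma=S_{\alpha'}$ it returns $\{\gamma\}$ immediately rather than recursing. This check is what guarantees progress in the single-critical-value case (and it is also what makes the clean recurrence $T(k)=c_0+T(j)+T(k-j)$ with $1\le j<k$ go through for the runtime bound). Once you add this check, your sketch is correct and coincides with the paper's proof; without it, the algorithm as you describe it need not terminate.
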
 

Note how this theorem reduces the requirements for the general scheme of \cite{dutting2022combinatorial} from three to two conditions by showing that properties \ref{prop1}.~and \ref{prop2}.~implies property \ref{prop3}.~Before proving \cref{prop:optContractSuff}, we describe the algorithm for enumerating all critical values and prove its correctness.

Recall the agent's utility function $u_a: [0,1] \to \reals$, given its best response to the contract $\alpha$, where we break ties towards the set with the higher reward and in some consistent manner:
$$
u_a(\alpha) = 
\max_{S\subseteq A} \alpha f(S) - c(S) = 
\max_{S\subseteq A} u(\alpha,S)
$$
It is clear that $u_a$ is piece-wise linear as a function of $\alpha$. 
Moreover, as the maximum over linear functions, it also exhibits convexity, as established in the following lemma
\begin{lemma}\label{lem:utilityConvex}
    Let $S_\alpha, S_\beta \subseteq A$ be two different sets 
    that maximize the agent's utility for two  different contracts $0 \le \alpha < \beta \le 1$.
    Then,
    \begin{enumerate}
        \item $f(S_\alpha) < f(S_\beta)$
        \item $c(S_\alpha) < c(S_\beta)$
    \end{enumerate}
\end{lemma}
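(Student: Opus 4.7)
The plan is to derive both inequalities by combining the two optimality conditions for $S_\alpha$ and $S_\beta$, and then use the consistent tie-breaking assumption to upgrade non-strict to strict inequalities.

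First I would write the two best-response inequalities. Since $S_\alpha$ maximizes $u_a(\alpha,\cdot)$ and $S_\beta$ maximizes $u_a(\beta,\cdot)$, we have
\begin{equation*}
\alpha f(S_\alpha) - c(S_\alpha) \;\ge\; \alpha f(S_\beta) - c(S_\beta), \qquad
\beta f(S_\beta) - c(S_\beta) \;\ge\; \beta f(S_\alpha) - c(S_\alpha).
\end{equation*}
Adding these and simplifying yields $(\beta-\alpha)(f(S_\beta)-f(S_\alpha)) \ge 0$, and since $\beta > \alpha$ we get $f(S_\beta) \ge f(S_\alpha)$.

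Next I would use consistent tie-breaking to rule out equality. If $f(S_\alpha) = f(S_\beta)$, then both displayed inequalities collapse to equalities, which in turn forces $c(S_\alpha) = c(S_\beta)$. But then the two sets $S_\alpha \ne S_\beta$ give identical utility for \emph{every} contract $\gamma \in [0,1]$ and identical reward; under the paper's consistent tie-breaking rule the agent must pick the same set at $\alpha$ as at $\beta$, contradicting $S_\alpha \ne S_\beta$. Hence $f(S_\alpha) < f(S_\beta)$.

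For the cost inequality, I would go back to the first displayed inequality and rewrite it as $c(S_\beta) - c(S_\alpha) \ge \alpha\bigl(f(S_\beta) - f(S_\alpha)\bigr)$. When $\alpha > 0$, the strict inequality $f(S_\alpha) < f(S_\beta)$ from step~2 immediately gives $c(S_\alpha) < c(S_\beta)$. The only subtle case is $\alpha = 0$, where this only yields $c(S_\alpha) \le c(S_\beta)$; here I would invoke the tie-breaking rule at $\alpha = 0$ directly: the agent picks a cost-minimizing set and among those the one with largest $f$, so if $c(S_\alpha) = c(S_\beta)$ the tie-breaking would force $f(S_\alpha) \ge f(S_\beta)$, contradicting what was just proved. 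The main obstacle is precisely this careful bookkeeping of the tie-breaking rule; the algebra itself is a one-line rearrangement.
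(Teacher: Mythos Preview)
Your proof is correct and follows essentially the same approach as the paper: combine the two optimality inequalities to obtain $f(S_\alpha)\le f(S_\beta)$, then invoke consistent tie-breaking for strictness. For part~2 the paper argues slightly differently (assuming $c(S_\alpha)\ge c(S_\beta)$ and showing $S_\beta$ would then be weakly better at $\alpha$ with strictly larger $f$, contradicting the tie-broken choice), which avoids your case split on $\alpha=0$; but the two arguments are equivalent and equally short.
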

\begin{proof}
    The fact that $S_\alpha$ is superior for contract $\alpha$ and $S_\beta$ is superior for $\beta$ implies that
    \begin{eqnarray*}
        \alpha f(S_\beta) - c(S_\beta) \le \alpha f(S_\alpha) - c(S_\alpha) \\
        \beta f(S_\beta) - c(S_\beta) \ge \beta f(S_\alpha) - c(S_\alpha)
    \end{eqnarray*}
    Rearranging,
    $$
    \alpha [f(S_\beta) - f(S_\alpha)] \le \beta [f(S_\beta) - f(S_\alpha)]
    $$
    Together with the fact that $\alpha < \beta$ we get that $f(S_\alpha) \le f(S_\beta)$.
    Due to consistent tie breaking, it must be  that $f(S_\alpha) < f(S_\beta)$.
    For the second part of the claim, observe that if $c(S_\alpha) \ge c(S_\beta)$, then
    $$
    u(\alpha,S_\beta) = \alpha f(S_\beta) - c(S_\beta) \ge \alpha f(S_\alpha) - c(S_\alpha) = u(\alpha,S_\alpha)
    $$
    Which contradicts the optimality of $S_\alpha$.
\end{proof}

Our algorithm for enumerating the set of critical values for a given segment, $[\alpha, \beta]$, makes use of the utility's piece-wise linear structure to query the agent's demand oracle with contracts that may be critical.
Because critical values are contracts for which the agent's demand change, those are points of indifference between two sets of actions. Namely, for any critical point $\gamma$ there exists at least two sets of actions $S, T \subseteq A$ with $u_a(\gamma,S) = u_a(\gamma, T)$.
This observation, together with the linear structure, allow us to identify all critical points using a recursive strategy:

Starting with the end-points of the segment $[\alpha, \beta]$, we query the oracle. If $S_\alpha = S_\beta$, there are no critical values in this segment. If the sets $S_\alpha$ and $S_\beta$ differ, then we have a critical value in this segment. If there is a single critical value, it can only be the point of indifference between those two sets: $\gamma = \frac{c(S_\beta) - c(S_\alpha)}{f(S_\beta) - f(S_\alpha)}$ and the oracle will return $S_\gamma = S_\beta$ (by tie breaking consistency).
If there is more than one critical value, we will get $S_\gamma \ne S_\beta$, and the critical values can be found by solving for both sub-segments $[\alpha,\gamma]$ and $[\gamma,\beta]$. The full description of the algorithm is presented below.

\begin{algorithm}[H]
\caption{Find All Critical Values in a Segment. \texttt{CV}($\alpha$, $\beta$)}\label{alg:CV}
\begin{algorithmic}[1]
\If{$S_\alpha = S_\beta$}
    \State return $\emptyset$
\EndIf
\State $\gamma \gets \frac{c(S_\alpha) - c(S_\beta)}{f(S_\alpha) - f(S_\beta)}$
\Comment{$\gamma$ is the intersection between $u_a(\cdot, S_\alpha)$ and $u_a(\cdot, S_\beta)$}
\If{$S_\gamma = S_\beta$}
    \State return $\{\gamma\}$
\EndIf
\State $C_1 \gets \texttt{CV}(\alpha, \gamma)$
\State $C_2 \gets \texttt{CV}( \gamma, \beta)$
\State return $C_1 \cup C_2$
\end{algorithmic}
\end{algorithm}

\begin{claim}\label{cla:CVcorrect}
    For any segment $[\alpha, \beta]$, $\texttt{CV}(\alpha, \beta)$ returns all the critical values in that segment.
\end{claim}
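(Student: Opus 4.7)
The plan is to prove correctness by strong induction on the number of critical values in $(\alpha,\beta]$ (equivalently, on the recursion depth). The two structural ingredients are: (a) $u_a$ is the upper envelope of the $2^n$ affine functions $u_a(\cdot,S)$ and hence is convex and piecewise linear; and (b) by \Cref{lem:utilityConvex}, whenever $S_\alpha \ne S_\beta$ the indifference point $\gamma$ is well-defined, satisfies $u_a(\gamma,S_\alpha)=u_a(\gamma,S_\beta)$, and lies strictly inside $(\alpha,\beta)$ (since the tie-broken demand at $\alpha$ is $S_\alpha$ and at $\beta$ is $S_\beta$, and $f(S_\alpha)<f(S_\beta)$).

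In the base case $S_\alpha = S_\beta$, I would show that $u_a$ coincides with $u_a(\cdot, S_\alpha)$ on all of $[\alpha,\beta]$ via a convexity squeeze: the function $\phi := u_a - u_a(\cdot,S_\alpha)$ is convex (convex minus affine), nonnegative (since $u_a$ is an upper envelope), and vanishes at both endpoints, and is therefore identically zero. So $S_\alpha$ is a maximizer throughout $[\alpha,\beta]$. Any alternative maximizer $S_x$ at an interior point would have $f(S_x)>f(S_\alpha)$ by \Cref{lem:utilityConvex}; applying the same squeeze at $\beta$ shows $S_x$ is also a maximizer at $\beta$, at which point consistent tie-breaking would prefer $S_x$ over $S_\alpha$, contradicting $S_\beta=S_\alpha$. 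Hence there are no critical values in $(\alpha,\beta]$, and returning $\emptyset$ is correct.

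In the second base case $S_\alpha \ne S_\beta$ and $S_\gamma = S_\beta$, I would apply the same squeeze separately on $[\alpha,\gamma]$ (using that $u_a(\gamma,S_\alpha)=u_a(\gamma,S_\beta)=u_a(\gamma)$ since $S_\gamma=S_\beta$) and on $[\gamma,\beta]$, concluding $S_x=S_\alpha$ on $[\alpha,\gamma)$ and $S_x=S_\beta$ on $[\gamma,\beta]$, so $\{\gamma\}$ is exactly the set of critical values. In the recursive case $S_\alpha\ne S_\beta$ and $S_\gamma\ne S_\beta$, I would first verify $S_\gamma\ne S_\alpha$: otherwise $S_\alpha$ and $S_\beta$ would tie at $\gamma$ (by the definition of $\gamma$), and tie-breaking would pick the higher-reward $S_\beta$ by \Cref{lem:utilityConvex}, forcing $S_\gamma=S_\beta$, contradiction. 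Hence the demands at $\alpha,\gamma,\beta$ are three distinct sets, so each sub-segment $(\alpha,\gamma]$ and $(\gamma,\beta]$ contains at least one critical value and thus strictly fewer than $(\alpha,\beta]$ does. The inductive hypothesis yields correctness of the two recursive calls, and since the critical values in $(\alpha,\beta]$ partition into those in $(\alpha,\gamma]$ and $(\gamma,\beta]$, the union $C_1\cup C_2$ is exactly the desired set.

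The main obstacle I anticipate is the convexity-squeeze step in the base cases: showing that $S_\alpha$ (or $S_\beta$) is a maximizer throughout a sub-segment is the easy half, but one must additionally rule out that consistent tie-breaking selects a different set at some interior point. This is precisely where \Cref{lem:utilityConvex} is used decisively — any alternative tie-broken demand must have strictly higher reward and, by the squeeze, must also tie at the right endpoint, contradicting the demand there. Once this is in hand, the recursion structure and the partitioning of critical values make the induction essentially mechanical.
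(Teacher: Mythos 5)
Your proposal is correct, and it shares the paper's overall strategy---induction on the number of critical values in $(\alpha,\beta]$, powered by piecewise linearity/convexity of $u_a$, \Cref{lem:utilityConvex}, and consistent tie-breaking---but the decisive step is argued along a genuinely different route. The paper's inductive step takes the extreme critical values $\dmin$ and $\dmax$ of the segment and shows by domination arguments (via \Cref{lem:utilityConvex}) that $\dmin$ lies strictly to the left of $\gamma$ and $\dmax$ strictly to the right, so both recursive calls see strictly fewer critical values. You instead organize the cases by the branch the algorithm takes, and in the recursive branch show directly that $S_\gamma$ differs from both $S_\alpha$ and $S_\beta$ (if $S_\gamma=S_\alpha$ then $S_\beta$ would tie at $\gamma$ and the reward-favoring tie-break would exclude $S_\alpha$), which immediately places at least one critical value in each of $(\alpha,\gamma]$ and $(\gamma,\beta]$. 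Your version buys some explicitness the paper leaves implicit: that returning $\emptyset$ is sound because $S_\alpha=S_\beta$ really forces zero critical values, and that the recursion branch is only reached when both halves contain critical values; note, though, that your convexity squeeze can be shortcut---for $\alpha<x<\beta$ a deviating tie-broken demand $S_x$ would give $f(S_\alpha)<f(S_x)<f(S_\beta)=f(S_\alpha)$ by \Cref{lem:utilityConvex}, an immediate contradiction. Two harmless imprecisions: $\gamma$ need not be strictly interior in general (the two lines may cross exactly at $\beta$, in which case branch 2 fires and $\{\beta\}$ is still the correct output); strict interiority is only needed in the recursive branch, where it does follow from $S_\gamma\ne S_\beta$. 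And where you say the tie-break ``would pick $S_\beta$,'' it might in fact pick some third maximizer of even higher reward, but all your argument needs is $S_\gamma\ne S_\alpha$, which still holds.
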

\begin{proof}
    We prove the claim by induction over the number of critical values in the segment.
    
    For the basis of the induction, assume there are no critical values in $[\alpha, \beta]$. Thus, it must be that the agent's utility is linear in that segment, by the consistency of tie breaking $S_\alpha = S_\beta$ ---- and the algorithm returns $\emptyset$.
    
    If there exists a single critical value, since the agent's utility is piece-wise linear, $S_\alpha \ne S_\beta$ and the critical value is at the intersection of the linear functions $u_a(x, S_\alpha)= xf(S_\alpha) - c(S_\alpha)$ and $u_a(x, S_\beta)= xf(S_\beta) - c(S_\beta)$, which is exactly $\gamma = \frac{c(S_\beta) - c(S_\alpha)}{f(S_\beta) - f(S_\alpha)}$.    
    Note that this implies that $S_\gamma = S_\beta$, as ties are broken in favor of the principal.

    Assume the correctness of the algorithm for any a segment with $k > i \ge 1$ critical values and consider the segment $[\alpha, \beta]$ with $k$ critical values.
    The inductive step will show that there is at least one critical value on each side of $\gamma$, so each of the sub-segments $[\alpha, \gamma]$ and $[\gamma, \beta]$ have strictly less than $k$ critical values.
    Thus, when calling $\texttt{CV}(\alpha,\gamma)$ and $\texttt{CV}(\gamma,\beta)$, by the induction hypothesis, the critical values for each of these sub-segments are returned. As implied by the base of the induction, if $\gamma$ is a critical value it will emerge from the $\texttt{CV}(\alpha,\gamma)$ branch in the calls tree.
    
    Let $\dmin$ be the smallest critical value in $(\alpha, \beta]$.
    As $S_\dmin$ is the indifference point between $S_\alpha$ and $S_\dmin$, $\dmin = \frac{c(S_\dmin) - c(S_\alpha)}{f(S_\dmin) - f(S_\alpha)}$.
    Aiming for contradiction, assume $\dmin \in [\gamma, \beta]$. Thus, $S_\alpha$ dominates $S_\dmin$ in the segment $[\alpha, \gamma)$ and at $\gamma$:
    $$    u_a(S_\beta,\gamma)=u_a(S_\alpha,\gamma) \ge u_a(S_\dmin, \gamma)
    $$
    By Lemma \ref{lem:utilityConvex}, $f(S_\beta)>f(S_\dmin)$ and so $S_\beta$ dominates $S_\dmin$ in $[\gamma,\beta]$, contradicting the optimality of $S_\dmin$.

    An analogous argument can be applied to show the existence of a critical value in the segment $(\gamma, \beta]$: 
    let $\dmax$ be the largest critical value in $[\alpha, \beta]$. Observe that $S_\dmax=S_\beta$, and let $S$ be the optimal set just preceding $S_\beta$ (that is, for every sufficiently small $\eps>0$, $S_{\dmax-\eps} = S$).
    Aiming for contradiction, assume $\dmax \in [\alpha, \gamma]$. 
    As $\dmax$ is the indifference point between $S$ and $S_\beta$, and $f(S_\beta) > f(S)$, it holds that $S_\beta$ dominates $S$ in the segment $[\gamma,\beta]$.
    Under contract $\gamma$:
    $$
    u_a(S_\alpha, \gamma) = u_a(S_\beta, \gamma) \ge u_a(S, \gamma)
    $$
    Combined with $f(S)>f(S_\alpha)$ (by Lemma \ref{lem:utilityConvex}), we get that for any $\delta < \gamma$:
    \begin{eqnarray*}
    u_a(S_\alpha, \delta) &=& 
    u_a(S_\alpha, \gamma) - (\gamma-\delta)f(S_\alpha) \\
    &\ge&
    u_a(S, \gamma) - (\gamma-\delta)f(S_\alpha) \\
    &>&
    u_a(S, \gamma) - (\gamma-\delta)f(S) \\
    &=&
    u_a(S, \delta)    
    \end{eqnarray*}
    This implies that $S$ in never optimal, a contradiction.
\end{proof}

\begin{claim}\label{cla:CVruntime}
    The above algorithm takes linear time in the number of critical values in $[\alpha, \beta]$.
\end{claim}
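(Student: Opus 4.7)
The plan is to recast the recursion of \texttt{CV}($\alpha,\beta$) as a binary tree and count its nodes, since each invocation performs $O(1)$ demand queries and $O(1)$ additional work on top of its recursive calls. Writing $k$ for the number of critical values lying in $(\alpha,\beta]$, the goal is to bound the total number of recursive calls by $O(k)$.

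First I would argue that every non-root recursive call is made on a sub-segment that already contains at least one critical value. This is precisely what the inductive step of \Cref{cla:CVcorrect} establishes: whenever the algorithm reaches its recursive branch (i.e., $S_\alpha \neq S_\beta$ \emph{and} $S_\gamma \neq S_\beta$), both $[\alpha,\gamma]$ and $[\gamma,\beta]$ contain at least one critical value. As a consequence, the ``no critical value'' base case $S_\alpha = S_\beta$ can only be triggered by the original top-level call, and every other leaf of the recursion tree is a ``one critical value'' leaf that returns the singleton $\{\gamma\}$ and corresponds to a single, distinct critical value of the original segment.

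Next, I would observe that $\gamma$ partitions the critical values of $(\alpha,\beta]$ into the critical-value sets of the two sub-segments, so if $k_1$ and $k_2$ denote the numbers of critical values passed to the left and right recursive calls, then $k_1 + k_2 = k$ with $k_1, k_2 \geq 1$. Letting $T(k)$ denote the maximum number of recursive calls on an input with $k$ critical values, this gives $T(0) = T(1) = 1$ and $T(k) = 1 + T(k_1) + T(k_2)$ for $k \geq 2$. A direct induction on $k$ then yields $T(k) \leq 2k - 1$ for all $k \geq 1$, from which the $O(k)$ running time follows.

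The only delicate point I foresee is the split-point bookkeeping $k_1 + k_2 = k$: if the splitting point $\gamma$ itself happens to be a critical value, one must avoid counting it twice in the recursive sub-instances. This is handled by a consistent half-open convention (counting critical values in $(\alpha',\beta']$ for segment $[\alpha',\beta']$), which ties in cleanly with the consistent tie-breaking rule already leveraged by the correctness proof. Once this accounting is pinned down, the recurrence and induction go through without further complication.
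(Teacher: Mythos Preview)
Your proposal is correct and follows essentially the same argument as the paper: both establish the recurrence $T(k)=c+T(k_1)+T(k_2)$ with $k_1+k_2=k$ and $k_1,k_2\ge 1$ (relying on the correctness proof to guarantee each sub-segment contains at least one critical value), and both conclude $T(k)\le c(2k-1)$ by induction. Your explicit discussion of the half-open bookkeeping when $\gamma$ is itself critical is a nice touch; the paper dispatches this point more tersely via a footnote.
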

\begin{proof}
    This can also be shown using an inductive argument.
    If the segment $[\alpha, \beta]$ has $n > 1$ critical values in it, by the description of the algorithm the recurrence relation describing the time complexity is of the form 
    $$
    T(n) = c_0 + T(n-k) + T(k)
    $$
    For some $1\le k<n$, where $c_0$ is the number of operation made in lines $1-7$ of \Cref{alg:CV}.
    The relation is true because each critical value is found in exactly one of the two sides of $\gamma$\footnote{if $\gamma$ is a critical point, it is processed in $\texttt{CV}(\alpha,\gamma)$, as discussed in the proof for correctness.}, and there is at least one critical value on each side. 
    
    For the basis of the induction, observe that $T(0), T(1) \le c_0$. 
    Thus, if we assume for any $l < n$ that $T(l) \le c_0(2l - 1)$, then by the above recurrence relation
    $$
    T(n) = c_0 + T(n-k) + T(k) \le c_0 + c_0(2(n-k) - 1) + c_0(2k - 1) = c_0(2n - 1),
    $$
    as claimed.
\end{proof}

\begin{proof}[Proof of ~\cref{prop:optContractSuff}]By claims \ref{cla:CVcorrect} and \ref{cla:CVruntime}, given an oracle access to the agent's demand and a polynomial number of critical values, we can efficiently find all critical values and by observation \ref{obs:critVals} one if these values constitutes the optimal contract.
\end{proof}

The sensitivity literature has also studied the number of breakpoints of several parameterized combinatorial optimization problems, typically with \emph{affine} parametrization \cite[e.g.,][]{carstensen1983complexity,gusfield1980sensitivity, karp1981parametric}. This is closely related to determining the number of critical values of $\alpha$ in linear contracts for combinatorial contracting problems. 
Our result (\Cref{prop:optContractSuff}) gives an immediate algorithmic implication of a  small (polynomial) number of critical values, namely a polynomial-time algorithm for finding the optimal linear contract. 

\section{Submodular Costs and Supermodular Rewards}
In this section, we use \Cref{prop:optContractSuff} to show that the optimal contract problem for complementary actions can be found in polynomial time. We model complementary actions using monotone submodular cost functions and monotone supermodular reward functions.
In this case, the marginal cost of an action is non-increasing in the set of actions already taken and the marginal reward is non-decreasing.

\begin{theorem}\label{thm:submodular-costs}
    Let $c:2^A \to \reals$ be a monotone submodular function and let $f: 2^A \to [0,1]$ be a monotone supermodular function, then there exists an efficient algorithm for computing the optimal contract. 
\end{theorem}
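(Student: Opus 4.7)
The strategy is to verify the two hypotheses of \Cref{prop:optContractSuff}: an efficient demand oracle and a polynomial bound on the number of critical values.

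\textbf{Efficient demand oracle.} The agent's problem of maximizing $\alpha f(S) - c(S)$ is equivalent to minimizing $h_\alpha(S) := c(S) - \alpha f(S)$. Since $c$ is submodular, $-f$ is submodular (as $f$ is supermodular), and $\alpha \geq 0$, the function $h_\alpha$ is submodular. Hence the demand problem reduces to submodular function minimization, solvable in polynomial time using value queries for $f$ and $c$.

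\textbf{Polynomially many critical values.} I plan to show that there are at most $n+1$ critical values by proving that for any $0 \leq \alpha < \beta \leq 1$, one may select best responses satisfying $S_\alpha \subseteq S_\beta$. Combined with the strict increase of $f(S_{\alpha_i})$ along critical values guaranteed by \Cref{obs:critVals}, this forces a strict chain $S_{\alpha_0} \subsetneq S_{\alpha_1} \subsetneq \cdots \subsetneq S_{\alpha_k}$ in $2^A$, so $k+1 \le n+1$.

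To establish the monotonicity $S_\alpha \subseteq S_\beta$, I would use a standard lattice/exchange argument. The function $g_\gamma(S) := \gamma f(S) - c(S)$ is supermodular for every $\gamma \geq 0$, so
\[
g_\gamma(S_\alpha \cap S_\beta) + g_\gamma(S_\alpha \cup S_\beta) \geq g_\gamma(S_\alpha) + g_\gamma(S_\beta).
\]
Applying this inequality at $\gamma = \alpha$ together with the optimality of $S_\alpha$ for $\alpha$, and at $\gamma = \beta$ together with the optimality of $S_\beta$ for $\beta$, subtracting, and using $\beta > \alpha$, I would deduce $f(S_\alpha \cap S_\beta) = f(S_\alpha)$ and (via monotonicity of $c$ and the submodularity-based bound on $c(S_\alpha \cup S_\beta)$) also $c(S_\alpha \cap S_\beta) = c(S_\alpha)$, with the symmetric equalities $f(S_\alpha \cup S_\beta) = f(S_\beta)$ and $c(S_\alpha \cup S_\beta) = c(S_\beta)$. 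Hence $S_\alpha \cap S_\beta$ and $S_\alpha \cup S_\beta$ are themselves best responses at $\alpha$ and $\beta$, respectively, with identical $f$- and $c$-values.

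\textbf{Main obstacle.} The only subtlety is pinning down the tie-breaking so that the selected best responses form a chain. The maximizers of a supermodular function form a lattice closed under $\cup$ and $\cap$, so this lattice has a unique inclusion-wise largest element; by monotonicity of $f$, this largest maximizer attains the highest $f$-value among all maximizers, so selecting it is compatible with the paper's tie-breaking convention of preferring higher $f$. Under this consistent rule, $S_\beta$ is the largest maximizer at $\beta$, hence contains $S_\alpha \cup S_\beta$ and therefore $S_\alpha$. Combining the chain bound with the efficient demand oracle and invoking \Cref{prop:optContractSuff} yields the theorem.
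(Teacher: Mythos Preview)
Your proposal is correct and follows essentially the same approach as the paper: verify the two hypotheses of \Cref{prop:optContractSuff} by reducing the demand oracle to submodular function minimization and by showing the best responses form a chain $S_{\alpha_0}\subsetneq\cdots\subsetneq S_{\alpha_k}$, hence at most $n+1$ critical values. The only cosmetic difference is in how the containment $S_\alpha\subseteq S_\beta$ is established: the paper argues directly with marginals (setting $R=S_\alpha\setminus S_{\alpha'}$ and showing $u_a(\alpha',R\mid S_{\alpha'})\ge 0$ from supermodularity of $f$ and submodularity of $c$), whereas you use the equivalent Topkis-style lattice inequality $g_\gamma(S\cap T)+g_\gamma(S\cup T)\ge g_\gamma(S)+g_\gamma(T)$ and then select the inclusion-maximal optimizer; both routes exploit the same supermodularity of the agent's utility and lead to the same conclusion.
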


We will use \Cref{prop:optContractSuff} and show the two sufficient conditions specified to guarantee the existence of an efficient algorithm for this problem: polynomially-many critical values and an efficient algorithm for the agent's demand.

\paragraph{Tractability of the Agent's Demand Problem.}
Where the agent's utility function, $u(\alpha, S) = \alpha f(S) - c(S)$, 
is such that $c$ is submodular, and $f$ is supermodular it can be easily seen that for any fixed $\alpha$, $u(\alpha, \cdot)$ is supermodular. 
As a demand oracle for the agent requires finding the set $S_\alpha \subseteq A$ which maximizes $u(\alpha, \cdot)$, this is equivalent to finding $S_\alpha$ which minimizes the submodular function $-u(\alpha, \cdot)$.
It is well known that there exists a polynomial time algorithm for submodular function minimization, commonly known as SFM  (e.g. \cite{iwata2001combinatorial}).
Thus, the agent's problem can be solved in polynomial time and a demand oracle for the agent can be implemented efficiently.

\begin{claim}\label{cla:submod-costs-agent-utility}
There exists an efficient algorithm that computes the demand of the agent for any contract $\alpha$.
\end{claim}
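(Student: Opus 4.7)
The plan is to reduce the agent's demand problem to unconstrained submodular function minimization (SFM). For any fixed $\alpha \in [0,1]$, the agent's best response is
$$
S_\alpha \;=\; \argmax_{S \subseteq A} \bigl[\alpha f(S) - c(S)\bigr],
$$
so it suffices to show that $g_\alpha(S) := c(S) - \alpha f(S)$ is a submodular set function on $2^A$, and then invoke any polynomial-time algorithm for SFM. The first step is to verify submodularity of $g_\alpha$: since $f$ is supermodular, $-f$ is submodular, and scaling by the nonnegative constant $\alpha$ preserves submodularity. Adding $c$, which is itself submodular, preserves submodularity as well. Hence $g_\alpha$ is submodular for every $\alpha \in [0,1]$, and finding $S_\alpha$ is exactly minimizing $g_\alpha$ over $2^A$.

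The second step is to invoke a polynomial-time SFM algorithm (e.g., Iwata--Fleischer--Fujishige~\cite{iwata2001combinatorial}) over the ground set $A$. Such algorithms require only value-oracle access to $g_\alpha$, which is immediately available by combining the value oracles for $f$ and $c$ assumed in the model. The total running time is polynomial in $n$ and in the oracle evaluation time, which establishes tractability of the demand problem.

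The only subtlety is the consistent tie-breaking convention (favoring sets with larger $f$), since a generic SFM routine may return an arbitrary minimizer of $g_\alpha$. I would not expect this to be a real obstacle: one can either perturb the input slightly, or use the standard fact that the minimizers of a submodular function form a distributive lattice and then invoke a second SFM call restricted to this lattice (or, equivalently, solve a tie-breaking subproblem within the minimal/maximal minimizers returned by Iwata--Fleischer--Fujishige) in order to select the minimizer achieving the largest $f(S)$. Since this tie-breaking is purely bookkeeping on top of polynomially many SFM calls, the overall algorithm remains polynomial, yielding the claim.
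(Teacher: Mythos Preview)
Your proposal is correct and essentially identical to the paper's argument: both observe that for fixed $\alpha$ the negative utility $c(S)-\alpha f(S)$ is submodular (as a nonnegative combination of $c$ and $-f$) and then invoke a polynomial-time SFM algorithm such as~\cite{iwata2001combinatorial}. You additionally address the tie-breaking convention (via perturbation or the lattice structure of SFM minimizers), a bookkeeping point the paper leaves implicit.
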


\paragraph{Linear Number of Critical Values.}
The supermodularity of the agent's utility does not only imply the existence of an efficient demand oracle, but also that there are no more than $n$ critical values. We show that in fact, the set of actions which is the agent's best response may only be extended as we increase the contract $\alpha$. This immediately implies that the number of optimal sets cannot surpass $n$.

\begin{claim}\label{cla:SuperModContainment}
    Let $c$ be a monotone submodular cost function and $f$ a monotone supermodular reward function, then for any two contracts $\alpha < \alpha'$ and two corresponding sets in the agent's demand $S_\alpha$, $S_{\alpha'}$ it holds that $S_\alpha \subseteq S_{\alpha'}$.
\end{claim}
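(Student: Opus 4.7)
The plan is to show that $S_\alpha \cup S_{\alpha'}$ is itself an agent best-response at contract $\alpha'$, and then invoke the tie-breaking rule favoring higher $f$ values to conclude that the chosen $S_{\alpha'}$ must contain $S_\alpha$.

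First I will observe that the agent's utility $u(\alpha, S) = \alpha f(S) - c(S)$ is supermodular in $S$ for every fixed $\alpha \geq 0$: $\alpha f$ inherits supermodularity from $f$ (a nonnegative scaling preserves supermodularity), and $-c$ is supermodular because $c$ is submodular, so the sum is supermodular. Applying the defining supermodular inequality of $u(\alpha', \cdot)$ on the pair $(S_\alpha, S_{\alpha'})$ and rearranging gives
\[
u(\alpha', S_\alpha \cup S_{\alpha'}) - u(\alpha', S_{\alpha'}) \;\geq\; u(\alpha', S_\alpha) - u(\alpha', S_\alpha \cap S_{\alpha'}) \;=\; \alpha'\bigl[f(S_\alpha) - f(S_\alpha \cap S_{\alpha'})\bigr] - \bigl[c(S_\alpha) - c(S_\alpha \cap S_{\alpha'})\bigr].
\]

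Next I will bound this right-hand side from below by $0$. The optimality of $S_\alpha$ at contract $\alpha$ yields $\alpha[f(S_\alpha) - f(S_\alpha \cap S_{\alpha'})] \geq c(S_\alpha) - c(S_\alpha \cap S_{\alpha'})$. Monotonicity of $f$ ensures $f(S_\alpha) - f(S_\alpha \cap S_{\alpha'}) \geq 0$, so the same inequality still holds when $\alpha$ is replaced by the larger $\alpha'$. Plugging into the display, $u(\alpha', S_\alpha \cup S_{\alpha'}) \geq u(\alpha', S_{\alpha'})$; combined with the optimality of $S_{\alpha'}$ at $\alpha'$, these two utilities must be equal. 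Thus $S_\alpha \cup S_{\alpha'}$ is also an agent best-response at $\alpha'$, and by monotonicity of $f$ it satisfies $f(S_\alpha \cup S_{\alpha'}) \geq f(S_{\alpha'})$.

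Finally I apply the consistent tie-breaking rule. Because the agent prefers the best-response with higher $f$ value, the chosen $S_{\alpha'}$ must satisfy $f(S_{\alpha'}) \geq f(S_\alpha \cup S_{\alpha'})$, hence $f(S_{\alpha'}) = f(S_\alpha \cup S_{\alpha'})$. The main obstacle will be to dispose of the residual case where these $f$-values coincide but the sets differ. I plan to resolve this by noting that the family of $u(\alpha',\cdot)$-maximizers with maximum $f$-value is closed under union: for any two such sets $T_1, T_2$, supermodularity of $u(\alpha',\cdot)$ forces $u(\alpha', T_1 \cup T_2) = \max_S u(\alpha', S)$, and monotonicity of $f$ gives $f(T_1 \cup T_2) = \max f$ in this family. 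Consequently this family has a unique inclusion-maximum element, and the consistent tie-breaking convention may be taken to select it; under this convention $S_{\alpha'} = S_\alpha \cup S_{\alpha'}$, yielding $S_\alpha \subseteq S_{\alpha'}$.
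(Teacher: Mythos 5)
Your proof is correct and follows essentially the same route as the paper: your lattice inequality $u(\alpha', S_\alpha \cup S_{\alpha'}) - u(\alpha', S_{\alpha'}) \ge u(\alpha', S_\alpha) - u(\alpha', S_\alpha \cap S_{\alpha'}) \ge 0$ is exactly the paper's marginal computation $u_a(\alpha', R \mid S_{\alpha'}) \ge u_a(\alpha, R \mid S_\alpha \cap S_{\alpha'}) \ge 0$ with $R = S_\alpha \setminus S_{\alpha'}$, using the same three ingredients (optimality of $S_\alpha$ at $\alpha$, monotonicity of $f$, and $\alpha < \alpha'$ together with super/submodularity). Both arguments then close the gap via the same tie-breaking convention---the paper by assuming $S_{\alpha'}$ is an inclusion-maximal best response and deriving a contradiction, you by noting the maximum-$f$ best responses are closed under union and letting the consistent rule select their unique inclusion-maximum---so your final step is just a slightly more explicit justification of the convention the paper also relies on.
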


\begin{proof}
    If $S_\alpha = S_{\alpha'}$ the claim obviously hold.
    Otherwise, assume that $S_{\alpha'}$ is a maximal best-response for contract $\alpha'$ (this is in line with our tie-breaking assumption), and also that 
    $S_\alpha \setminus S_{\alpha'} = R$ is such that $R \ne \emptyset$, we will show that a contradiction is reached.
    By the fact that $S_\alpha$ is optimal for $\alpha$, it must be that 
    $$
    u_a(\alpha, R \mid S_\alpha \cap S_{\alpha'}) = u_a(\alpha, S_\alpha) - u_a(\alpha, S_\alpha \cap S_{\alpha'}) \ge 0
    $$
    By the supermodularity of $f$ and submodularity of $c$ it holds that $f(R \mid S_\alpha \cap S_{\alpha'}) \le f(R \mid S_{\alpha'})$ and $c(R \mid S_\alpha \cap S_{\alpha'}) \ge c(R \mid S_{\alpha'})$. Putting everything together we get
    \begin{eqnarray*}
        u(\alpha', R \mid S_{\alpha'}) 
        &=&
        \alpha' f(R \mid S_{\alpha'}) - c(R \mid S_{\alpha'}) \\
        &\ge&
        \alpha' f(R \mid S_\alpha \cap S_{\alpha'}) - c(R \mid S_\alpha \cap S_{\alpha'}) \\ 
        &\ge&
        \alpha f(R \mid S_\alpha \cap S_{\alpha'}) - c(R \mid S_\alpha \cap S_{\alpha'}) \\ 
        &=&
        u(\alpha, R \mid S_\alpha) \\
        &\ge&
        0
    \end{eqnarray*}
     Where the second inequality follows from the monotonicity of $f$, which imply $f(R \mid S_\alpha \cap S_{\alpha'}) \ge 0$.
     Thus, we can add $R$ to $S_{\alpha'}$ while not losing utility, contradicting its maximality.
\end{proof}

It is worth mentioning that it is possible to prove Theorem~\ref{thm:submodular-costs} via the blueprint in \cite{dutting2022combinatorial}, which requires an efficient algorithm for solving the agent's demand problem, a polynomial number of critical values, and a polynomial time procedure to compute the next critical value.

The existence of the first two conditions is established in Claim \ref{cla:submod-costs-agent-utility} and Claim \ref{cla:SuperModContainment} above. The task of finding the next critical value can be solved by adopting ideas from Nagano \cite{nagano2007strongly}. Nagano's goal is to solve the minimum ratio problem, a natural question when considering parametric submodular minimization, and they accomplish this using the parametric search technique of Meggido \cite{megiddo1978combinatorial}.

Using Nagano's approach, finding the next critical value requires following a submodular function minimization algorithm which is fully-combinatorial (i.e. only uses additions, subtractions and comparisons), and for every comparison it makes run another instance of any SFM algorithm.
The best known fully combinatorial SFM algorithm runs in $\tilde{O}(n^8)$ steps \cite{iwata2002fully}.
Our approach is both simpler and more 
efficient, since we compute all critical values by calling $O(n)$ times to a demand oracle, which can be implemented using any SFM algorithm (e.g. \cite{iwata2003faster}, which runs in $\tilde{O}(n^7)$).

\section{Matching-Based Rewards and Additive Costs}

In this section we consider a settings in which an agent matches between tasks and resources.
This is captured by a bipartite graph $G=(V \cup U, A)$ where $V$ is the set of tasks and $U$ is the set of resources. 
An edge $(v,u) \in A$ indicates that resource $u$ can be used to complete task $v$. 
Each such edge is associated with cost $c_{v,u} \ge 0$ and reward $f_{v,u} \ge 0$.
For a set $S \subseteq A$ of edges (actions) chosen by the agent, its cost $c(S)$ is additive and its reward $f(S)$ is the max weight matching in the subgraph induced by $S$, with respect to weights $\{f_{v,u}\}_{(v,u)\in A}$.
We call such a function $f$ 
\emph{matching-based reward function}. 

It is easy see that every matching based reward function is XOS, a superset of submodular functions. \Cref{ex:MatchingNotSubNotSup} shows that some matching-based functions are strictly XOS.

\begin{example}\label{ex:MatchingNotSubNotSup}
    Consider the bipartite graph with vertices $V = \{v_1,v_2\}$ and $U = \{u_1,u_2\}$ and edges $A = \{(v_1,u_1), (v_1,u_2),(v_2,u_1)\}$, with rewards (see \Cref{fig:nonSubSupMod}):
    $$f(v_1,u_1) = f(v_1,u_2) = 2  \qquad f(v_2,u_1) = 1$$
    Observe that 
    $f(v_2,u_1) = 1$
    and 
    $f((v_2,u_1) \mid \{(v_1,u_1)\}) = 0$,
    which suggests $f$ is not supermodular.
    However, since the edges $\{(v_1,u_2), (v_2,u_1)\}$ make up a more rewarding matching than $\{(v_1,u_1)\}$, we have that 
    $f((v_2,u_1) \mid \{(v_1,u_1), (v_1,u_2)\}) = 1$, so $f$ is not submodular either. 
\end{example}

\begin{figure}[H]
\centering    
\begin{tikzpicture}[scale=0.125]
\tikzstyle{every node}+=[inner sep=0pt]
\draw [black] (0,0) circle (3); \draw (0,0) node {$v_1$};
\draw [black] (0,-15) circle (3); \draw (0,-15) node {$v_2$};
\draw [black] (30,0) circle (3); \draw (30,0) node {$u_1$};
\draw [black] (30,-15) circle (3); \draw (30,-15) node {$u_2$};
\draw [black] (3,0) -- (27,0); \draw (15,0) node [above] {\small$f=2$};
\draw [black] (3,0) -- (27,-15); \draw (21,-11) node [rotate=-32, above] {\small$f=2$};
\draw [black] (3,-15) -- (27,0); \draw (9,-11.5) node [rotate=32, below] {\small$f=1$};
\end{tikzpicture}
\caption{Matching-based instance with $f$ not submodular nor supermodular}
\label{fig:nonSubSupMod}
\end{figure}
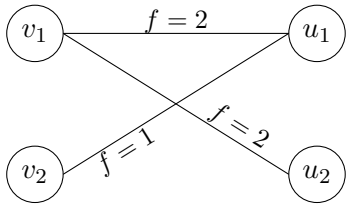

In this setting, the agent's utility for a set of edges $S$ and a contract $\alpha$ is given by
$$
u_a(\alpha, S) = \max_{\substack{S' \subseteq S, \\ S' \text{ is a matching}}} \sum_{ (v,u) \in S'} \alpha f_{v,u}  - \sum_{(v,u) \in S} c_{v,u}
$$
and its best response for a contract $\alpha$ is without loss of generality a set $S_\alpha \subseteq A$ which constitutes a matching.
Thus, the agent's demand problem corresponds to finding a maximum weight matching where the weight of an edge $(v,u)$ is $w_\alpha(v,u) = \alpha f_{v,u} - c_{v,u}$ (note that the weights may be negative). 
As such, the demand can be computed efficiently using any algorithm for maximum weighted matching, so the first condition mentioned in  \Cref{prop:optContractSuff} is met.

In \Cref{subsec:superpoly} we show that the second condition of \Cref{prop:optContractSuff} is not always fulfilled, and present family of matching-based instances for which the number of critical values is super-polynomial.
In \Cref{subsec:PolyMatchingBased}, we give two special cases for matching-based rewards and costs in which there are polynomially-many critical values and the optimal contract is tractable. 

\subsection{A Super-Polynomial Lower Bound on the Number of Critical Values}
\label{subsec:superpoly}

We derive a super-polynomial lower bound on the number of critical values from the construction made by Mulmuley and Shah \cite{mulmuley2000lower}, based on the work of Carstensen \cite{carstensen1983complexity}, 
for the parameterized shortest $s-t$ path problem.
In this problem, we are given a directed graph with two special nodes, $D=(X \cup \{s,t\}, \Tilde{E})$, and a set of affine weight functions, $\{\Tilde{w}_{\alpha}(e) = \Tilde{c}_e - \alpha \Tilde{f}_e\}_{e\in E}$, 
of a parameter $\alpha \in \reals_{\ge 0}$.  
The weights are such that for every $e \in E$, $\Tilde{c}_e \ge 0$. 
The goal is to find all the shortest paths between $s$ and $t$ for a given range of values of $\alpha$.
\begin{theorem}[Theorem 1.3, \cite{mulmuley2000lower}]\label{thm:MulShah}
    There is a family of graphs $D_n=(X_n \cup \{s,t\}, \Tilde{E}_n)$ with $O(n^4)$ vertices and affine weight functions $\{\Tilde{w}_\alpha(e) = \Tilde{c}_e - \alpha \Tilde{f}_e \mid e \in \Tilde{E}_n\}$, together with an upper bound $Q_n \in \reals_{\ge 0}$, such that
    \begin{enumerate}
        \item For every $e \in \Tilde{E}_n$, $\Tilde{c}_e \ge 0$
        \item For every $\alpha \in [0,Q_n]$ and for every $e\in \Tilde{E}_n$, $\Tilde{w}_e = \Tilde{c}_e - \alpha \Tilde{f}_e \ge 0$
        \item There are $m = 2^{\Omega(\log^2n)}$ values, $0 \le \alpha_1 <...<\alpha_m < Q_n$ such that for any two $\alpha_j \ne \alpha_i$, the shortest $s-t$ path with respect to weights $\Tilde{w}_{\alpha_i}$ is different from the shortest $s-t$ path with respect to $\Tilde{w}_{\alpha_j}$.
    \end{enumerate}
\end{theorem}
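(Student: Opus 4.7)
The strategy is to construct $D_n$ explicitly by recursion, following Carstensen's~\cite{carstensen1983complexity} parametric-LP lower-bound template adapted to shortest paths. A base gadget $H_0$ consists of two internally disjoint $s$-$t$ paths whose affine weight functions cross at a single $\alpha$; this gives one breakpoint using $O(1)$ vertices. For $k \ge 1$, I would define $H_k$ by placing $b = \Theta(n)$ serial copies of $H_{k-1}$ between $s$ and $t$, separated by small ``selector'' sub-gadgets that force the shortest $s$-$t$ path to make an independent local choice in each copy. Crucially, the weights in the $i$-th copy are multiplied by a distinct power of a large integer $M_k$, chosen so that the breakpoints contributed by different copies live at well-separated scales. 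Setting $D_n := H_{k^\star}$ for $k^\star = \Theta(\log n)$, and tracking the vertex count across levels, yields a graph with $O(n^4)$ vertices.

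The main inductive claim is that the number of breakpoints of $H_k$ satisfies $m_k \ge b \cdot m_{k-1}$, which unrolls to $m_{k^\star} \ge b^{k^\star} = 2^{\Omega(\log^2 n)}$. To prove this I would argue that any two distinct tuples of local-copy choices yield shortest paths that strictly dominate each other over disjoint sub-intervals of $\alpha$: writing the global path weight as the sum of per-copy linear forms scaled by powers of $M_k$, the base-$M_k$ structure reduces the comparison of two global paths to a lexicographic comparison of their local choices. Consequently, distinct tuples yield distinct breakpoints of $H_k$, and the recursive count holds.

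The non-negativity conditions $\tilde{c}_e \ge 0$ and $\tilde{w}_\alpha(e) \ge 0$ on $[0,Q_n]$ are maintained by choosing the additive parts $\tilde{c}_e$ large enough relative to the slopes $\tilde{f}_e$ at every recursive level, and by defining $Q_n$ as the minimum of $\tilde{c}_e/\tilde{f}_e$ across edges with $\tilde{f}_e > 0$. A final check, using the same scaling estimates, shows that all $m_{k^\star}$ constructed breakpoints lie inside $[0, Q_n]$ and appear in strictly increasing order.

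\emph{Main obstacle.} The hardest step is the breakpoint-distinctness claim: the linear forms associated with distinct choice-tuples must remain \emph{distinct as functions of $\alpha$}, so that their pairwise intersections are genuine new breakpoints rather than accidental coincidences. The canonical device is choosing $M_k$ large enough that the coefficients of $\alpha$ coming from different tuples cannot cancel; making this fully rigorous across $\Theta(\log n)$ nested levels, while simultaneously keeping the vertex count at $O(n^4)$ and all breakpoints inside $[0, Q_n]$, is the core technical burden and essentially the content of Carstensen's and Mulmuley--Shah's papers.
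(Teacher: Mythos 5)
This theorem is not proved in the paper at all: it is imported verbatim as Theorem 1.3 of Mulmuley and Shah, which in turn builds on Carstensen's construction, so your attempt has to be judged against those constructions rather than against anything in this paper. Measured that way, your sketch has a genuine gap at its core. In a purely \emph{serial} composition of $b$ decoupled copies of $H_{k-1}$, the global shortest $s$-$t$ path decomposes into an independent shortest sub-path in each copy, so the optimum changes only at values of $\alpha$ where some single copy's local optimum changes; as $\alpha$ sweeps, the tuple of local choices changes one coordinate at a time. Hence the breakpoint set of $H_k$ is the \emph{union} of the copies' breakpoint sets, giving $m_k \le b\, m_{k-1}$ with $|H_k| \approx b\,|H_{k-1}|$ --- breakpoints grow at most linearly with the number of edges, never super-polynomially. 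Multiplying the $i$-th copy's weights by powers of $M_k$ does not change this, because each copy's local optimization is already independent of the others; the ``lexicographic'' interaction you invoke simply never occurs for decoupled serial blocks. The actual Carstensen/Mulmuley--Shah constructions get around exactly this obstacle by \emph{coupling} the choices made in different parts of the graph (shared edges and interleaved weight scales inside one path system), so that the lower-envelope complexity multiplies across levels while the vertex count only adds or doubles; that coupling is the heart of the proof and is absent from your plan.

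There is also a quantitative inconsistency in your parameters even if the recursion $m_k \ge b\, m_{k-1}$ were granted. With $b=\Theta(n)$ copies per level and $k^\star=\Theta(\log n)$ levels, the vertex count is $b^{k^\star} = n^{\Theta(\log n)}$, which is super-polynomial and violates the required $O(n^4)$ bound; shrinking $b$ to a constant to keep the size polynomial collapses the breakpoint count to $b^{k^\star} = 2^{O(\log n)} = \mathrm{poly}(n)$. So under your scheme one cannot simultaneously achieve $O(n^4)$ vertices and $2^{\Omega(\log^2 n)}$ breakpoints; the correct constructions achieve both precisely because breakpoints multiply per level while size does not. Finally, the non-negativity requirements (items 1 and 2 of the statement) need the bound $Q_n$ to lie \emph{above} all the constructed breakpoints, so defining $Q_n$ as $\min_e \tilde{c}_e/\tilde{f}_e$ must be checked against the locations of the breakpoints produced by the scaling --- another point your sketch defers but that the cited papers handle explicitly.
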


In order to apply the above lower bound to the number of critical values, we present a reduction from the parametric shortest $s-t$ path to the agent's demand problem for matching-based costs and rewards.
We begin by reducing an instance of the shortest $s-t$ path problem for a fixed value of $\alpha$ to the minimum weight perfect matching problem in a bipartite graph. We will show that the shortest $s-t$ has a corresponding minimum perfect weight matching of the same weight. This reduction will preserve the parametric structure of the weight functions, so we will get $m$ different perfect weight matchings for $\alpha_1,...,\alpha_m$.

This will be followed by another reduction from the minimum perfect weight matching to the agent's demand problem. The affine structure of the weights will remain, but we will manipulate the coefficients in a way that will make this maximum weight matching problem a valid instance of the agent's demand problem.

\paragraph{Shortest $s-t$ Path to Minimum Perfect Matching.}

Given an instance of the $s-t$ path problem $(D,\Tilde{w}_\alpha)$, where $D=(X\cup\{s,t\}, \Tilde{E})$ is a directed graph with non-negative weights $\{\Tilde{w}_e = \Tilde{c}_e - \alpha\Tilde{f}_e\}_{e \in \Tilde{E}}$ on its edges, we construct an instance of the minimum perfect weight matching problem, $(G',w'_{\alpha})$. In this problem, we need to find the minimum perfect matching in the undirected bipartite graph $G'=(V'\cup U', E')$ with respect to weights defined over the edges $w'_{\alpha}:E' \to \reals$.
The instance we construct is defined as follows (see illustration in Figure \ref{fig:stReduction}).

\begin{figure}[H]
\centering    
\begin{tikzpicture}[scale=0.165]
\tikzstyle{every node}+=[inner sep=0pt]
\draw [black] (29.7,-20.4) circle (3);
\draw (29.7,-20.4) node {$t$};
\draw [black] (19.3,-10.9) circle (3);
\draw (19.3,-10.9) node {$x_1$};
\draw [black] (8.8,-20.4) circle (3);
\draw (8.8,-20.4) node {$s$};
\draw [black] (19.3,-29.4) circle (3);
\draw (19.3,-29.4) node {$x_2$};
\draw [black] (11.02,-18.39) -- (17.08,-12.91);
\fill [black] (17.08,-12.91) -- (16.15,-13.08) -- (16.82,-13.82);
\draw (11.17,-15.16) node [above] {$w_{s,1}$};
\draw [black] (18.748,-26.452) arc (-171.42741:-188.57259:42.277);
\fill [black] (18.75,-26.45) -- (19.12,-25.59) -- (18.13,-25.74);
\draw (17.78,-20.15) node [left] {$w_{1,2}$};
\draw [black] (17.02,-27.45) -- (11.08,-22.35);
\fill [black] (11.08,-22.35) -- (11.36,-23.25) -- (12.01,-22.49);
\draw (11.76,-25.39) node [below] {$w_{2,s}$};
\draw [black] (21.57,-27.44) -- (27.43,-22.36);
\fill [black] (27.43,-22.36) -- (26.5,-22.51) -- (27.15,-23.26);
\draw (26.68,-25.39) node [below] {$w_{2,t}$};
\draw [black] (21.51,-12.92) -- (27.49,-18.38);
\fill [black] (27.49,-18.38) -- (27.23,-17.47) -- (26.56,-18.21);
\draw (26.69,-15.16) node [above] {$w_{1,t}$};
\draw [black] (19.839,-13.851) arc (8.37112:-8.37112:43.27);
\fill [black] (19.84,-13.85) -- (19.46,-14.71) -- (20.45,-14.57);
\draw (20.8,-20.15) node [right] {$w_{2,1}$};
\draw [black] (45.2,-8.3) circle (3);
\draw (45.2,-8.3) node {$s$};
\draw [black] (45.2,-17.1) circle (3);
\draw (45.2,-17.1) node {$v_1$};
\draw [black] (45.2,-26.3) circle (3);
\draw (45.2,-26.3) node {$v_2$};
\draw [black] (73,-17.1) circle (3);
\draw (73,-17.1) node {$u_1$};
\draw [black] (73,-26.3) circle (3);
\draw (73,-26.3) node {$u_2$};
\draw [black] (73,-35.9) circle (3);
\draw (73,-35.9) node {$t$};
\draw [black] (48.2,-26.3) -- (70,-26.3);
\draw (64,-25.8) node [above] {$0$};
\draw [black] (48.2,-17.1) -- (70,-17.1);
\draw (64,-16.6) node [above] {$0$};
\draw [black] (48.05,-25.36) -- (70.15,-18.04);
\draw (66,-20) node [below] {$w_{2,1}$};
\draw [black] (48.06,-9.21) -- (70.14,-16.19);
\draw (59,-12.3) node [above] {$w_{s,1}$};
\draw [black] (48.05,-18.04) -- (70.15,-25.36);
\draw (56,-20) node [above] {$w_{1,2}$};
\draw [black] (47.69,-18.78) -- (70.51,-34.22);
\draw (59.8,-28) node [below] {$w_{1,t}$};
\draw [black] (48.04,-27.28) -- (70.16,-34.92);
\draw (59,-31.67) node [below] {$w_{2,t}$};
\end{tikzpicture}
\caption{A reduction from a directed graph for the $s-t$ problem (left) to a bipartite graph for the minimum perfect weight matching (right).}
\label{fig:stReduction}
\end{figure}
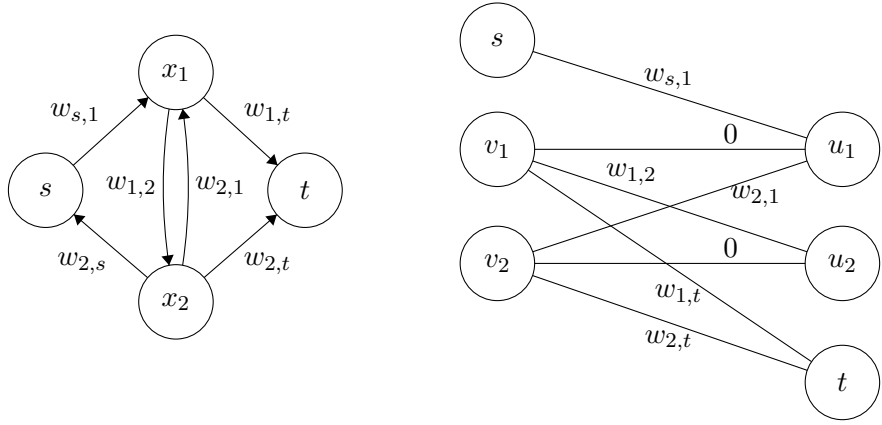

Each vertex $x_i \in X$ is associated with two vertices in $G'$, $v_i \in V'$ and $u_i \in U'$ and the vertices $s$ and $t$ in $D$ correspond to $s \in V'$ and $t \in U'$ respectively.
The edges, $E'$, are defined as follows:
For every directed edge $(x_i, x_j) \in \Tilde{E}$, where $x_i, x_j \in X$, we have a corresponding $(v_i, u_j) \in E'$. For every edge that leaves $s$ in the graph $D$, we have a corresponding one in $E'$. That is, $(s,x_i)\in \Tilde{E} \to (s,u_i) \in E'$ and $(s,t)\in \Tilde{E} \to (s,t) \in E'$. We also define corresponding edges to those that enter $t$, namely $(x_i,t)\in \Tilde{E} \to (v_i,t) \in E'$ (We ignore any edge that enters $s$ or leaves $t$, as those are never part of the optimal $s-t$ path). We also add an edge $(v_i,u_i) \in E'$ for every $i \in \{1,...,|X|\}$.
For each $(v,u) \in E'$ we define the weights $w'$ as follows:
$$
w'_\alpha(v,u) = 
\begin{cases}
    \Tilde{w}_\alpha(x_i,x_j) = \Tilde{c}_{i,j} - \alpha\Tilde{f}_{i,j} & v=v_i, u=u_j, j\ne i\\
    \Tilde{w}_\alpha(s,x_i) = \Tilde{c}_{s,i} - \alpha\Tilde{f}_{s,i} & v=s, u=x_i\\
    \Tilde{w}_\alpha(x_i,t) = \Tilde{c}_{i,t} - \alpha\Tilde{f}_{i,t} & v=v_i, u=t\\
    0 = 0 - \alpha 0 & v=v_i, u=u_i
\end{cases}
$$
We claim that this construction is such that for any $\alpha$, the shortest $s-t$ path in $(D,\Tilde{w}_\alpha)$ has a corresponding minimum perfect matching in $(G,w'_\alpha)$ with the same weight. 

\begin{claim}\label{cla:minPerfectReduction}
    For any value of $\alpha$, the shortest $s-t$ path in $(D,\Tilde{w}_\alpha)$, $P_\alpha$ has a corresponding minimum perfect weight matching in $(G',w'_\alpha)$, $M_\alpha$ such that both have the same weight.
    
\end{claim}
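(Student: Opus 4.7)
The plan is to exhibit a weight-preserving construction from $s$-$t$ paths in $D$ to perfect matchings in $G'$, and to argue that no perfect matching can beat the shortest path, thereby pinning the constructed matching as a minimum.

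For the forward construction, given the shortest path $P_\alpha=(s,x_{i_1},\ldots,x_{i_k},t)$, I would let $M_\alpha$ consist of the ``path edges'' $(s,u_{i_1}),(v_{i_1},u_{i_2}),\ldots,(v_{i_{k-1}},u_{i_k}),(v_{i_k},t)$---all of which lie in $E'$ by construction---together with the zero-weight ``diagonal'' edges $(v_j,u_j)$ for every $j\notin\{i_1,\ldots,i_k\}$. This is a perfect matching of $G'$, and by the definition of $w'_\alpha$ each path edge carries the same weight as its counterpart in $D$ while each diagonal edge contributes $0$, so $w'_\alpha(M_\alpha)=\Tilde{w}_\alpha(P_\alpha)$.

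For the matching lower bound, I would show that every perfect matching $M$ of $G'$ yields some $s$-$t$ path in $D$ whose $\Tilde{w}_\alpha$-weight is at most $w'_\alpha(M)$. Trace from $s$: let $u_{j_1}$ (possibly $t$) be the match of $s$, let $u_{j_2}$ (possibly $t$) be the match of $v_{j_1}$, and continue. Because $M$ is a matching and each $u$-vertex is matched exactly once, the indices $j_1,j_2,\ldots$ are pairwise distinct, so the trace must terminate at $t$ within $|X|$ steps and produces an $s$-$t$ path $P$ in $D$ whose $\Tilde{w}_\alpha$-weight equals the total $w'_\alpha$-weight of the traced edges. The remainder of $M$ on the unused indices is a bijection $\{v_j\}\to\{u_j\}$; each of its cycles of length $\ell\ge 2$ corresponds to a directed cycle $x_{j_1}\to x_{j_2}\to\cdots\to x_{j_\ell}\to x_{j_1}$ in $D$, which by property 2 of \Cref{thm:MulShah} has non-negative $\Tilde{w}_\alpha$-weight. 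Replacing each such cycle by the zero-weight diagonals $(v_{j_1},u_{j_1}),\ldots,(v_{j_\ell},u_{j_\ell})$ yields a matching $M'$ with $w'_\alpha(M')\le w'_\alpha(M)$; and by construction $M'$ is exactly the canonical matching associated to $P$, so $\Tilde{w}_\alpha(P)=w'_\alpha(M')\le w'_\alpha(M)$.

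Combining the two directions, if $M$ is any minimum perfect matching, then $\Tilde{w}_\alpha(P_\alpha)\le \Tilde{w}_\alpha(P)\le w'_\alpha(M)\le w'_\alpha(M_\alpha)=\Tilde{w}_\alpha(P_\alpha)$ forces equalities throughout, so $M_\alpha$ is a minimum perfect matching of the same weight as $P_\alpha$. The main obstacle I expect is the residual-cycle argument: one has to see that the non-path portion of any matching decomposes into cycles that translate into directed cycles of $D$, and it is precisely the non-negativity of $\Tilde{w}_\alpha$ (property 2 of \Cref{thm:MulShah}) that justifies swapping these cycles for zero-weight diagonals without increasing the matching weight. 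Once this is in hand, the distinctness of traced indices and the bijection between paths and their canonical matchings are routine bookkeeping.
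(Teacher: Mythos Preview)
Your proposal is correct and follows essentially the same approach as the paper: construct the canonical matching from the shortest path via path edges plus zero-weight diagonals, then for the converse trace an $s$-$t$ path out of an arbitrary perfect matching and argue the residual portion can be replaced by diagonals without increasing weight. The paper handles the residual step in one line (``we can assume the rest of the edges are of the form $(v'_j,u'_j)$, as those are edges with minimal weights''), whereas you spell out the cycle decomposition and invoke non-negativity of $\Tilde{w}_\alpha$ explicitly---a more careful justification of the same idea.
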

\begin{proof}
    Fix $\alpha$ and assume without loss of generality that $P_\alpha=(s,x_1,...,x_k,t)$ is the shortest $s-t$ path in $D$, then by definition of $G'$ there exists a perfect matching of the same weight,
    $
    M=\{(s,u_1),(v_1,u_2),...,(v_{k-1},u_k),(v_k,t)\}\cup\{(v_j,u_j) \mid j \notin [k]\}
    $
    .

    Let $M'_\alpha$ be the minimum weight perfect matching in $(G',w'_\alpha)$ and assume that its weight is strictly lesser than the weight of $M$ above.
    We will construct an $s-t$ path, $P$, in $(D,\Tilde{w}_\alpha)$, which is shorter than $P_\alpha$ and reach contradiction.
    Let us rename the vertices such that $(s,u'_1),(v'_1,u'_2),\ldots,$ $(v'_{l-1},u'_l),(v'_l,t) \in M'_\alpha$. 
    We can assume the rest of the edges in $M'_\alpha$ are of the form $(v'_j,u'_j)$, as those are edges with minimal weights. 
    Observe that the path $P=(s,x'_1,...,x'_l,t)$ is an $s-t$ path in $D$ with the same weight as $M_\alpha$. Thus, by assumption
    $$
        \Tilde{w}_\alpha(P) = w'_\alpha(M'_\alpha) < w'_\alpha(M) = \Tilde{w}_\alpha(P_\alpha)
    $$
\end{proof}

\Cref{cla:minPerfectReduction} above together with  \Cref{thm:MulShah}, imply that there exists a family of bipartite graphs with affine weight functions such that the set of minimum perfect weight matchings, for all values of the parameter $\alpha$, corresponds exactly to the set of all shortest $s-t$ paths.

\begin{corollary}\label{cor:perfectMinMatching}
    There exists a family of bipartite graphs $G'_n=(V'_n \cup U'_n, E'_n)$ with $O(n^4)$ vertices and affine weight functions $\{w'_\alpha(e) = c'_e - \alpha f'_e \mid e\in E'_n\}$, together with an upper bound $Q_n \in \reals_{\ge 0}$, such that
    \begin{enumerate}
        \item For every $e \in E'_n$, $c'_e \ge 0$
        \item For every $\alpha \in [0,Q_n]$ and for every $e\in E_n$, $w'_e = c'_e - \alpha f'_e \ge 0$
        \item There are $m = 2^{\Omega(\log^2n)}$ values, $0 \le \alpha_1 <...<\alpha_m < Q_n$ such that for any two $\alpha_j \ne \alpha_i$, the minimum weight perfect matching with respect to weights $w'_{\alpha_i}$ is different from the minimum weight perfect matching with respect to $w'_{\alpha_j}$.
    \end{enumerate}
\end{corollary}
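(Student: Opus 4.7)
The plan is to instantiate the reduction described before \Cref{cla:minPerfectReduction} on the hard parametric shortest-path family from \Cref{thm:MulShah}, and then verify that each of the three required properties carries over. Starting from $D_n = (X_n \cup \{s,t\}, \Tilde{E}_n)$ with affine weights $\{\Tilde{w}_\alpha(e) = \Tilde{c}_e - \alpha \Tilde{f}_e\}$ and threshold $Q_n$ furnished by \Cref{thm:MulShah}, I would form $G'_n = (V'_n \cup U'_n, E'_n)$ exactly as in \Cref{fig:stReduction}: split each $x_i \in X_n$ into $v_i \in V'_n$ and $u_i \in U'_n$, place $s \in V'_n$ and $t \in U'_n$, carry each directed edge of $D_n$ to its bipartite counterpart with the same affine weight, and add the padding edges $(v_i, u_i)$ with weight $0$. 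Since $|X_n| = O(n^4)$, the resulting $G'_n$ has $O(n^4)$ vertices, and the coefficients $c'_e, f'_e$ of $w'_\alpha(e) = c'_e - \alpha f'_e$ are either inherited directly from the corresponding edge of $D_n$ or are both zero for padding edges.

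Properties (1) and (2) then transfer immediately from \Cref{thm:MulShah}: every $c'_e$ is either some $\Tilde{c}_{e'} \ge 0$ or zero, and for each $\alpha \in [0, Q_n]$ every $w'_\alpha(e)$ is either $\Tilde{w}_\alpha(e') \ge 0$ or zero. These checks are routine and amount to unfolding the piecewise definition of $w'_\alpha$ given in the reduction.

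The main obstacle lies in property (3): upgrading the weight equality established by \Cref{cla:minPerfectReduction} into distinctness of the minimum weight perfect matchings at the $m = 2^{\Omega(\log^2 n)}$ parameter values $\alpha_1, \ldots, \alpha_m$ provided by \Cref{thm:MulShah}. I would address this by making explicit the bijection implicit in the proof of \Cref{cla:minPerfectReduction}: a simple $s$-$t$ path $(s, x_{i_1}, \ldots, x_{i_l}, t)$ in $D_n$ corresponds canonically to the perfect matching $\{(s, u_{i_1}), (v_{i_1}, u_{i_2}), \ldots, (v_{i_l}, t)\} \cup \{(v_j, u_j) : j \notin \{i_1, \ldots, i_l\}\}$ in $G'_n$, and conversely, using the tie-breaking rule that prefers padding edges whenever possible, any minimum weight perfect matching decodes uniquely back to an $s$-$t$ path by following its alternating $s \to t$ chain. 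Since the shortest paths at the $\alpha_i$ are pairwise distinct by \Cref{thm:MulShah} and this path-to-matching correspondence is injective and weight-preserving, the minimum weight perfect matchings at $\alpha_1, \ldots, \alpha_m$ must themselves be pairwise distinct, which establishes property (3) and completes the argument.
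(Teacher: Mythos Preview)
Your proposal is correct and follows essentially the same approach as the paper, which simply states that the corollary follows from combining \Cref{cla:minPerfectReduction} with \Cref{thm:MulShah}. You have spelled out the details the paper leaves implicit: verifying that properties (1) and (2) transfer from the padding and inherited weights, and for property (3) observing that the path-to-matching map is injective and weight-preserving, so distinct shortest paths at the $\alpha_i$ yield distinct minimum perfect matchings.
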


\paragraph{Minimum Weight Perfect Matching to the Agent's Demand Problem.}
Given a minimum weight perfect matching instance $(G',w')$ as above, we construct a bipartite graph $G=(V\cup U, E)$ with weights $w:E \to \reals_{\ge 0}$ that correspond to the agent's demand problem. The vertices and edges of $G$ are the same as $G'$, so $V=V'$, $U=U'$ and $E=E'$, we only modify the weights.

Observe that if we just negate the weights on edges, namely setting them to be $\alpha f'_e - c'_e$, then those will not constitute a valid instance of the agent's demand problem, as we still need to guarantee that every subset of edges $S$ in the new graph has a valid success probability, namely $f(S) \in [0,1]$.
Moreover, to use the minimum perfect matchings in $G'$ and show that the corresponding matchings are optimal in $G$, we need to increase the weights of the edges, which are currently all non-positive. 
Thus, we change the weight function to be for every $e \in E$ and every $\alpha$:
$$
w_\alpha(e) = \alpha f_e - c_e = \alpha\left(\frac{f'_e + K}{2K|E|}\right) - \frac{c'_e}{2K|E|}
$$
Where
$$
K = \max_{e \in E} |f'_e| + \max_{e \in E} \frac{c'_e}{\alpha_2} > 0
$$
(recall that $\alpha_2 > 0$).
Where we apply the above weights on to the family of graphs $G'_n$ from Corollary \ref{cor:perfectMinMatching}, we get the super-polynomial lower bound on the number of critical values for matching-based costs and rewards.
\begin{theorem}\label{thm:superPolyCVs}
    There is a family of bipartite graphs $G_n=(V_n \cup U_n, E_n)$, which are a valid instance of the agent's demand problem for matching-based costs and rewards. The graph $G_n$ has $O(n^4)$ vertices and affine weight functions $\{w_\alpha(e) = \alpha f_e - c_e \mid e\in E_n\}$, together with a lower bound $\alpha_2 > 0$ and an upper bound $Q_n \in (\alpha_2, \infty)$, such that
    \begin{enumerate}
        \item For every $e \in E_n$, $f_e, c_e \ge 0$.
        \item For every $S \subseteq E_n$, $f(S) \in [0,1]$.
        \item For every $\alpha \in [\alpha_2,Q_n]$ and for every $e\in E_n$, $w_\alpha(e) \ge 0$
        \item There are $m = 2^{\Omega(\log^2n)}$ critical values, $0 < \alpha_2 <...< \alpha_m < Q_n$ such that for any $i\in \{2,...,m\}$, the maximum weight matching is a perfect matching. Each critical value has a distinct optimal matching.
    \end{enumerate}
\end{theorem}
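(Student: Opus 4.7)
The plan is to take the family $(G'_n, w'_\alpha)$ from \Cref{cor:perfectMinMatching} verbatim, leave the graph structure unchanged (so $G_n = G'_n$), and equip the edges with the rescaled weights $f_e = (f'_e + K)/(2K|E_n|)$ and $c_e = c'_e/(2K|E_n|)$ defined just above the theorem, where $K = \max_e|f'_e| + \max_e c'_e/\alpha_2$. The agent's demand in $G_n$ under contract $\alpha$ is then a max-weight matching with respect to edge weights $w_\alpha(e) = \alpha f_e - c_e$, and the entire proof reduces to checking the four enumerated properties.

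Properties 1 and 2 are routine bookkeeping once the constants are in place. The non-negativity of $c_e$ is inherited from $c'_e \ge 0$, while $f_e \ge 0$ uses the fact that $K$ dominates $|f'_e|$ by construction. For property 2, I would observe that the denominator $2K|E_n|$ was chosen precisely so that $f_e \le 1/|E_n|$, and any matching has at most $|E_n|$ edges, giving $f(S) \le 1$ for every $S$.

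Property 3 is the only place where the definition of $K$ is used non-trivially. I would rewrite $w_\alpha(e) \ge 0$ as $\alpha(f'_e + K) \ge c'_e$ and check it at the worst-case $\alpha = \alpha_2$: the choice of $K$ ensures $f'_e + K \ge c'_e/\alpha_2$, which is exactly what is needed.

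Property 4 is the heart of the argument, and is where I expect the main obstacle to lie. The reason one gets $m-1$ distinct optimal matchings in $(G_n, w_\alpha)$ at the values $\alpha_2, \ldots, \alpha_m$ from \Cref{cor:perfectMinMatching} is that maximizing $\sum_{e \in M} w_\alpha(e)$ over matchings $M$ in $G_n$ is, for $\alpha \in [\alpha_2, Q_n]$, equivalent to \emph{minimizing} $w'_\alpha(M)$ over perfect matchings $M$ in $G'_n$. Two things need to be justified carefully. First, that the optimum is attained by a perfect matching: here I would use that $G_n = G'_n$ contains the zero-weight ``identity'' edges $(v_i,u_i)$ built into the previous reduction, so a perfect matching exists, and by property 3 adding any edge to a matching can only weakly increase its weight, so the agent's demand is always perfect on $[\alpha_2, Q_n]$. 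Second, that over perfect matchings the two objectives are affinely related: since every perfect matching has exactly $N = |V_n|$ edges, one gets $\sum_{e\in M} w_\alpha(e) = (\alpha K N - w'_\alpha(M))/(2K|E_n|)$, so the argmax in $G_n$ equals the argmin in $G'_n$. Combining this bijection with \Cref{cor:perfectMinMatching} immediately transfers the $m = 2^{\Omega(\log^2 n)}$ distinct optimal matchings, and hence the corresponding critical values, to the contract problem.
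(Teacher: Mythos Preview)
Your proposal is correct and follows essentially the same approach as the paper's own proof: properties 1--3 are checked directly from the definition of $K$, and for property 4 you (like the paper) first argue via non-negativity of $w_\alpha$ that the agent's optimal matching on $[\alpha_2,Q_n]$ is perfect, and then use the affine relation $\sum_{e\in M} w_\alpha(e) = (\alpha K\,|V_n| - w'_\alpha(M))/(2K|E_n|)$ on perfect matchings to transfer the $2^{\Omega(\log^2 n)}$ breakpoints from \Cref{cor:perfectMinMatching}. If anything, your write-up is slightly more explicit than the paper's in spelling out why a perfect matching exists (the identity edges) and in writing down the affine identity.
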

\begin{proof}
    One can easily verify that the definition of $K$ guarantees that the first two conditions hold.
    For the non-negativity of the edges, consider some $\alpha > 
    \alpha_2$:
    \begin{eqnarray*}
        w_\alpha(e) &=& \alpha\left(\frac{f'_e + K}{2K|E|}\right) - \frac{c'_e}{2K|E|} \\
        &=& \frac{1}{2K|E|}(\alpha(f'_e +K) - c'_e) \\
        &\ge& \frac{1}{2K|E|}(\alpha(f'_e + \max_{e \in E} |f'_e|) + \alpha  \max_{e \in E} \frac{c'_e}{\alpha_2}- c'_e) \\
        &\ge& 0
    \end{eqnarray*}
    Because for any contract $\alpha \in [\alpha_2, Q_n]$, $w_\alpha(e) \ge 0$ for every $e \in E$, every maximum weight perfect matching is a maximum weight matching.

    For every $\alpha \in [\alpha_2, Q_n]$, a matching $M_\alpha$ is a perfect minimum matching in $(G'_n, w'_\alpha)$ if and only if it is a perfect maximum matching in $(G_n, w_\alpha)$. To see that, take any other perfect matching $M$ and observe that
    \begin{eqnarray*}
        w'_\alpha(M_\alpha) \le w'_\alpha(M) 
        &\leftrightarrow&
        \frac{1}{2K|E|}\left( |V_n|(K\alpha) -w'_\alpha(M_\alpha) \right) \ge \frac{1}{2K|E|}\left( |V_n|(K\alpha) -w'_\alpha(M) \right) \\
        &\leftrightarrow&
       \frac{1}{2K|E|} \sum_{e \in M_\alpha}  (\alpha K + \alpha f'_e - c'_e) \ge \frac{1}{2K|E|} \sum_{e \in M} (\alpha K + \alpha f'_e - c'_e) \\
        &\leftrightarrow&
        \sum_{e \in M_\alpha} \left(\alpha\left(\frac{f'_e + K}{2K|E|}\right) - \frac{c'_e}{2K|E|} \right)
        \ge \sum_{e \in M} \left(\alpha\left(\frac{f'_e + K}{2K|E|}\right) - \frac{c'_e}{2K|E|}\right) \\
        &\leftrightarrow&
        \sum_{e \in M_\alpha} (\alpha f_e - c_e )
        \ge \sum_{e \in M} (\alpha f_e - c_e ) \\
        &\leftrightarrow&
        w_\alpha(M_\alpha) \ge w_\alpha(M) 
    \end{eqnarray*}
The claim follows from Corollary \ref{cor:perfectMinMatching}.
\end{proof}

\subsection{Polynomial Number of Critical Values for Special Cases} \label{subsec:PolyMatchingBased}

In this section we show the tracability of the optimal contract for restricted cases of the matching based setting. Besides the two cases below, one can easily show that when the space of costs or rewards is of constant size, the number of critical values is small. We give the proof in \Cref{apx:ConstantRewardSpace}.

\paragraph{Pseudo-Polynomial Bound for Integer $f$ and $c$.}
When the costs and rewards take integer values, a result by Carstensen \cite{carstensen1983complexity}
implies that the optimal contract can be found in time polynomial in the number of vertices $n$, and the maximal reward $\max_{u,v} f_{u,v}$ and cost $\max_{u,v} c_{u,v}$.
\begin{theorem}[Carstensen \cite{carstensen1983complexity}]
    The number of breakpoints of an integer linear program with a parameterized objective function $\max_x \alpha f(x) - c(x)$, for $f, c \in \mathbb{Z}^n$,
    and a fixed set of constraints $Ax \le b$,
    is polynomial in $n$, $F$ and $C$. 
    Where $n$ is the dimension of $x$, $F = \max_{u,v} f_{u,v}$ and $C = \max_{u,v} c_{u,v}$.
\end{theorem}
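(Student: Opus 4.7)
The plan is to argue that every breakpoint of the parametric integer program is a rational $p/q$ whose numerator and denominator are integers of magnitude bounded polynomially in $n$, $F$, and $C$, and then to count the number of such distinct rationals.

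For the first step, I would describe the optimal value of the program as a function of $\alpha$: it is the upper envelope, taken over all feasible integer solutions $x$ with $Ax \le b$, of the affine functions $\phi_x(\alpha) = \alpha\,(f \cdot x) - (c \cdot x)$. A breakpoint $\alpha^\star$ is by definition a value of the parameter at which the optimal solution changes, so there exist two distinct feasible solutions $x \ne x'$ that are both optimal at $\alpha^\star$, meaning $\phi_x(\alpha^\star) = \phi_{x'}(\alpha^\star)$. Solving this linear equation yields
\[
\alpha^\star \;=\; \frac{c \cdot x - c \cdot x'}{f \cdot x - f \cdot x'},
\]
whenever the denominator is nonzero (and if it vanishes, the two lines are parallel and contribute no breakpoint between them). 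In the matching application of interest, feasible $x$ are indicator vectors of matchings, so $x, x' \in \{0,1\}^n$ and $x - x' \in \{-1, 0, 1\}^n$; combined with $\|f\|_\infty \le F$ and $\|c\|_\infty \le C$, this forces the numerator to be an integer in $[-nC, nC]$ and the denominator to be an integer in $[-nF, nF]$.

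The final step is to count the distinct rationals of the form $p/q$ with $p \in \mathbb{Z} \cap [-nC, nC]$ and $q \in \mathbb{Z} \cap [-nF, nF] \setminus \{0\}$; the cardinality of this set is at most $(2nC+1)(2nF+1) = O(n^2 FC)$, yielding the required polynomial bound. The main obstacle I anticipate is the implicit boundedness of feasible $x$: for an arbitrary integer program, the coordinates of feasible $x$ could themselves be large integers, in which case the same argument would scale with the maximum feasible coordinate magnitude $B$ and produce $O(n^2 B^2 FC)$ rather than $O(n^2 FC)$. For the matching application we have $B = 1$, so the stated bound follows directly; for a more general formulation one would need to either restrict to $0/1$ integer programs, or re-interpret $F$ and $C$ as bounds on the objective contributions $f_i x_i$ and $c_i x_i$ uniformly over feasible $x$.
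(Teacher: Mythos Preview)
The paper does not actually prove this theorem; it is stated as a citation of Carstensen~\cite{carstensen1983complexity} and used only to obtain \Cref{cor:MathcingQuasiPoly}. So there is no ``paper's own proof'' to compare against.

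As for the argument itself: your counting of rational breakpoints is correct and is essentially the standard elementary proof for the $0/1$ case. You are also right to flag the gap for general integer programs with constraints $Ax \le b$: without a bound on the coordinates of feasible $x$, the numerator and denominator of $\alpha^\star$ are integers of magnitude up to $nBF$ and $nBC$ where $B$ is the coordinate bound, and the theorem as stated (polynomial in $n,F,C$ alone) does not follow from your argument in that generality. Carstensen's original result handles this by bounding the bit-complexity of basic feasible solutions of the LP relaxation in terms of the data, which is where the polynomial dependence comes from for general ILPs. For the purpose of this paper, however, only the matching instantiation is used, and there your argument with $B=1$ is sufficient and arguably cleaner than invoking the general theorem.
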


\begin{corollary}\label{cor:MathcingQuasiPoly}
    For matching based rewards and additive costs, with $f,c \in \mathbb{Z}^{|A|}$, where $C=\max_{e\in E} c_i$ and $F=\max_{e\in E} f_i$ are both polynomial in $n$, the optimal contract can be 
    computed efficiently.
\end{corollary}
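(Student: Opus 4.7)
The plan is to derive this corollary by checking the two hypotheses of \Cref{prop:optContractSuff} for this special case: (i) the agent's demand is efficiently computable, and (ii) the number of critical values $|\critical_{f,c}|$ is polynomial in $n$. The first hypothesis is essentially free from the section preamble; the second is obtained by invoking Carstensen's theorem (quoted just before the corollary).

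First, I would verify the efficient demand oracle. As spelled out at the start of the section, the agent's best response for contract $\alpha$ is a maximum weight matching in $G=(V\cup U, A)$ with respect to edge weights $w_\alpha(v,u)=\alpha f_{v,u}-c_{v,u}$. Since weighted bipartite matching is solvable in polynomial time, the demand oracle required by \Cref{prop:optContractSuff} is available.

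Second, I would bound $|\critical_{f,c}|$ by casting the demand problem as a parametric integer linear program of precisely the form to which Carstensen's theorem applies. Let $x\in\{0,1\}^{|A|}$ be the indicator vector of the chosen edge set and encode the matching constraints as $\sum_{e\ni v} x_e \le 1$ for all $v\in V\cup U$; the parameterized objective is $\max_{x} \alpha f\cdot x - c\cdot x$, with the fixed constraint set $Ax\le b$, $x\in\{0,1\}^{|A|}$. Critical values of $\alpha$ coincide with breakpoints of this ILP: between two consecutive breakpoints, some fixed matching $S$ is optimal and the agent's utility $\alpha f(S)-c(S)$ is linear in $\alpha$, and at a critical value the optimizing $S$ changes. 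Carstensen's theorem bounds the number of breakpoints by a polynomial in $|A|$, $F$, and $C$; under the hypothesis that $F$ and $C$ are polynomial in $n$, this yields $|\critical_{f,c}| = \mathrm{poly}(n)$.

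The one subtlety to be careful about is the correspondence between our (tie-broken) critical values and Carstensen's breakpoints; since our tie-breaking is consistent and favors higher $f(S)$, no new critical values are introduced beyond those where the underlying set of optimal matchings changes, so the identification is genuine. Given this, \Cref{prop:optContractSuff} applies and immediately delivers an efficient algorithm that enumerates all critical values and selects the one maximizing $u_p(\alpha)$, which by \Cref{obs:critVals} is the optimal contract. This completes the proof of the corollary.
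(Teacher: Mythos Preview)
Your proposal is correct and matches the paper's intended argument: the paper states Carstensen's theorem immediately before the corollary and treats the result as an immediate consequence (no explicit proof is given), relying on exactly the two ingredients you spell out --- the efficient demand oracle via max-weight matching from the section preamble, and the polynomial bound on breakpoints from Carstensen's theorem --- combined via \Cref{prop:optContractSuff}. Your explicit ILP formulation and the remark on tie-breaking versus breakpoints make the derivation more careful than the paper's one-line treatment, but the approach is the same.
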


\paragraph{One-Sided Costs.}
If the costs of the edges depends only on the $U$ side of the edge, namely $c_{v,u}=c_u$ for every $(v,u) \in A$ (the case for $V$ side is symmetric), then the number of critical values can be bounded by reducing the agent's utility problem to an instance in which costs are additive and rewards are gross-substitutes, utilizing the result of \cite{dutting2022combinatorial}. 
\begin{theorem}\label{thm:oneSidedCosts}
    For a matching-based instance in which the costs are one-sided, the number of critical values is $O(n^2)$, where $n$ is the number of vertices.
\end{theorem}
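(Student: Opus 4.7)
The plan is to reformulate the agent's best-response problem as an equivalent contract problem on the $U$-side of the graph, with additive costs and a Rado (hence gross substitutes) reward function, and then invoke the $O(n^2)$ critical-value bound of \cite{dutting2022combinatorial} for GS rewards and additive costs.

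Since the agent's demand is without loss of generality a matching $M$, and each edge cost $c_{v,u}=c_u$ depends only on its $U$-endpoint, the total cost of $M$ satisfies $\sum_{(v,u)\in M}c_{v,u}=\sum_{u\in U(M)}c_u$, where $U(M)\subseteq U$ is the set of $U$-vertices saturated by $M$. For each $T\subseteq U$, define
\[
g(T) \;=\; \max\Bigl\{\sum_{(v,u)\in M} f_{v,u} \,:\, M \text{ a matching in } G[V\cup T]\Bigr\},
\qquad \tilde c(T) \;=\; \sum_{u\in T} c_u.
\]
Because $c_u\ge 0$ and $g$ is monotone, an optimal $T$ in $\max_{T\subseteq U}(\alpha\, g(T)-\tilde c(T))$ equals $U(M^{\star})$ for the corresponding optimal matching $M^{\star}$. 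Thus the agent's utility rewrites as $u_a(\alpha)=\max_{T\subseteq U}(\alpha\, g(T)-\tilde c(T))$, an equivalent contract instance over the ground set $U$.

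Next, I would verify that $g$ is Rado on ground set $U$ by instantiating the definition in \Cref{sec:model}: take the bipartite graph $G=(V\cup U,A)$ with weights $\{f_{v,u}\}$ together with the free matroid $(V,2^V)$ on the $V$-side. Then $g(T)$ is exactly the maximum-weight matching between $T$ and an independent subset of $V$, as required. Since Rado $\subset$ GS in the complement-free hierarchy, $g$ is gross-substitutes on $U$ and $\tilde c$ is additive on $U$.

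The reformulated instance therefore has GS rewards and additive costs on a ground set of size $|U|\le n$, so the bound of \cite{dutting2022combinatorial} gives at most $O(|U|^2)=O(n^2)$ critical values. To conclude, I would check that the critical values of the two formulations coincide: the identity $f(M)=g(U(M))$ for the max-weight matching $M$ inside $G[V\cup T]$, together with the fact that the optimal $T$ equals $U(M_\alpha)$ at every $\alpha$ (by monotonicity of $g$ and non-negativity of $c_u$), puts the demands of the two problems in bijection and makes the consistent tie-breaking rules (favoring higher reward) compatible. The step I expect to be the main obstacle is precisely this tie-breaking/bijection check, so that each change in the reformulated demand $T_\alpha$ corresponds to exactly one change in $S_\alpha=M_\alpha$ in the original instance; once this is in hand, the $O(n^2)$ bound transfers immediately from \cite{dutting2022combinatorial}, and combined with \Cref{prop:optContractSuff} and a max-weight matching oracle for demand, the polynomial-time algorithm follows.
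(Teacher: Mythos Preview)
Your proposal is correct and follows essentially the same approach as the paper: reformulate the agent's problem over the ground set $U$ with additive costs $\tilde c$ and the Rado reward $g$ (using the free matroid on $V$), then invoke the $O(n^2)$ bound for GS rewards and additive costs from \cite{dutting2022combinatorial}. The equivalence step you flag as the main obstacle is exactly what the paper isolates as \Cref{cla:RadoEquiv}; once both directions of that correspondence are established, the agent's utility function $u_a(\alpha)$ is identical in the two formulations, so the breakpoints (and hence the critical values) coincide without further tie-breaking analysis.
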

Recall that for a set of edges $S\subseteq A$, when costs are one-sided the agent's utility under contract $\alpha$ is
$$
u_a(\alpha, S) = \max_{\substack{S' \subseteq S, \\ S' \text{ is a matching}}} \sum_{ (v,u) \in S'} \alpha f_{v,u}  - \sum_{(v,u) \in S} c_u
$$ 
We show that the agent's utility problem can be reformulated in terms of the vertices of $U$, with additive costs and a Rado reward function (as defined in \Cref{sec:model}).

Let $c':2^U \to \reals_{\ge 0}$ be an additive cost function. That is, for any set $T \subseteq U$, $c'(T) = \sum_{u \in T} c_u$. Let $f':2^U \to \reals$ be Rado with respect to the graph $G$ with weights $\{f_{v,u}\}_{(v,u)\in A}$ and the trivial matroid $M = (V, 2^V)$.

\begin{claim}\label{cla:RadoEquiv}
For every $\alpha \in [0,1]$, 
if $S_\alpha$ is a matching that maximizes the agent's utility among all subsets of $A$, then the set of vertices in $U$ that are matched in $S_\alpha$
maximize $\alpha f'(T) - c'(T)$ among all subsets $T \subseteq U$.
In the other direction, if $T_\alpha \subseteq U$ maximizes $\alpha f'(T) - c'(T)$ among all subsets $T \subseteq U$,
then the max weight matching in the subgraph induced by $T_\alpha$ with respect to weights $\{f_{v,u}\}_{(v,u)\in A}$, maximizes the agent's utility under contract $\alpha$.
\end{claim}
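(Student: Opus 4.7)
}
The plan is to identify the agent's matching-based utility problem with a reformulation over $U$, using two simple reductions: (a) without loss of generality the agent selects a matching, so that the one-sided cost collapses to $c'(T)$ where $T$ is the set of matched $U$-vertices, and (b) among matchings whose $U$-endpoints equal $T$, the agent prefers one of maximum weight, which gives reward $f'(T)$. The two directions of the claim then drop out by chaining inequalities.

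First I would establish the following reduction. For any $S \subseteq A$, let $S^\star \subseteq S$ be a max-weight matching contained in $S$; then
\[
u_a(\alpha, S) \;=\; \alpha\, w(S^\star) \;-\; \sum_{(v,u)\in S} c_u \;\le\; \alpha\, w(S^\star) - \sum_{(v,u)\in S^\star} c_u \;=\; u_a(\alpha, S^\star),
\]
because $c_u \ge 0$ and $S^\star \subseteq S$. Hence the agent's best response is WLOG a matching. For a matching $S$, the one-sided cost reads $\sum_{(v,u)\in S} c_u = \sum_{u \in T(S)} c_u = c'(T(S))$, where $T(S)\subseteq U$ is the set of $U$-endpoints of $S$, and by definition of $f'$ we have $w(S)\le f'(T(S))$, with equality achievable by the max-weight matching in $G[T(S)\cup V]$. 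This yields the two-sided bound I would use throughout: for every matching $S$,
\[
u_a(\alpha,S) \;\le\; \alpha f'(T(S)) - c'(T(S)),
\]
and for every $T\subseteq U$, the max-weight matching $M_T$ in $G[T\cup V]$ is itself a matching with $u_a(\alpha, M_T) \ge \alpha f'(T) - c'(T)$ (the inequality, rather than equality, accounts for the possibility that $M_T$ leaves some vertices of $T$ unused, which only lowers the incurred cost).

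With these inequalities in hand, the two directions are immediate. \emph{Forward:} suppose $S_\alpha$ is a matching that maximizes $u_a(\alpha,\cdot)$ over all of $2^A$, and set $T_\alpha = T(S_\alpha)$. For any $T\subseteq U$,
\[
\alpha f'(T) - c'(T) \;\le\; u_a(\alpha, M_T) \;\le\; u_a(\alpha, S_\alpha) \;\le\; \alpha f'(T_\alpha) - c'(T_\alpha),
\]
proving that $T_\alpha$ maximizes $\alpha f'(T) - c'(T)$. \emph{Backward:} suppose $T_\alpha$ maximizes $\alpha f'(T) - c'(T)$, and let $M_\alpha$ be a max-weight matching in $G[T_\alpha \cup V]$. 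For any $S\subseteq A$, passing to $S^\star$ and then applying the two bounds gives
\[
u_a(\alpha,S) \;\le\; u_a(\alpha,S^\star) \;\le\; \alpha f'(T(S^\star)) - c'(T(S^\star)) \;\le\; \alpha f'(T_\alpha) - c'(T_\alpha) \;\le\; u_a(\alpha, M_\alpha),
\]
so $M_\alpha$ is a best response under $\alpha$. The only real care needed is bookkeeping: tracking that $T(M_T)\subseteq T$ and that passing from $S$ to $S^\star$ preserves the $U$-endpoints' contribution to the cost — neither is a substantive obstacle, just a place to be precise.
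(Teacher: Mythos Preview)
Your proposal is correct and follows essentially the same approach as the paper: both arguments reduce to the two observations that (i) an optimal $S$ is WLOG a matching, so the one-sided cost collapses to $c'(T(S))$, and (ii) the best reward among matchings with $U$-side $T$ is exactly $f'(T)$, and then chain the resulting inequalities in both directions. Your treatment is in fact slightly more careful than the paper's in one place: you explicitly allow $M_T$ to leave some vertices of $T$ unmatched and handle this via an inequality, whereas the paper tacitly identifies $\sum_{u\in T_\alpha} c_u$ with $\sum_{(v,u)\in A(T_\alpha)} c_u$.
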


\begin{proof}
    For every matching $S \subseteq A$, denote by $U(S) \subseteq U$ the vertices from $U$ which that are matched in $S$. For every set $T \subseteq U$, denote by $A(T)$ the max weight matching in the subgraph induced by $T$ w.r.t weights $\{f_{v,u}\}_{(v,u)\in A}$. Observe that since the weights are non-negative and costs are one-sided:
    \begin{eqnarray*}
        \max_{S \subseteq A} \alpha f(S) - c(S) &=&
        \alpha f(S_\alpha) - c(S_\alpha) \\
        &=&
        \alpha \sum_{(v,u) \in S_\alpha} f_{v,u} - \sum_{(v,u) \in S_\alpha} c_{v,u} \\
        &=&
        \alpha \sum_{(v,u) \in S_\alpha} f_{v,u} - \sum_{u \in U(S_\alpha)} c_u \\
        &=&
        \alpha \sum_{(v,u) \in A(U(S_\alpha))} f_{v,u} - \sum_{u \in U(S_\alpha)} c_u \\
        &=&
        \alpha f'(U(S_\alpha)) - c'(U(S_\alpha)) \\
        &\le&
        \max_{T\subseteq U} \alpha f'(T) - c'(T)
    \end{eqnarray*}
    Where the fourth inequality follows from the optimality of $S_\alpha$ -- if $f(A(U(S_\alpha))) > f(S_\alpha)$, it will be the agent's best response as it has the same cost as $S_\alpha$. 
    On the other hand, we can use similar arguments:
    \begin{eqnarray*}
        \max_{T\subseteq U} \alpha f'(T) - c'(T)
        &=&
        \alpha f'(T_\alpha) - c'(T_\alpha) \\
        &=&
        \alpha \sum_{(v,u) \in A(T_\alpha)} f_{v,u} - \sum_{u \in T_\alpha} c_u \\
        &=&
        \alpha \sum_{(v,u) \in A(T_\alpha)} f_{v,u} - \sum_{(v,u) \in A(T_\alpha)} c_u \\
        &=&
        \alpha f(A(T_\alpha)) - c(A(T_\alpha)) \\
        &\le&
        \alpha f(S_\alpha) - c(S_\alpha)\\
    \end{eqnarray*}
    Where the third and fourth equality transition hold because $A(T_\alpha)$ is a matching.
\end{proof}

\begin{proof}[Proof of \Cref{thm:oneSidedCosts}]
    \Cref{cla:RadoEquiv} shows that under one-sided costs the agent's utility maximization problem can be formulated in terms of the vertices of $U$ and the reward and cost functions $f':2^U \to \reals_{\ge 0}$ and $c':2^U \to \reals_{\ge 0}$ described above. 
    As $c'$ is additive and $f'$ is Rado function, a sub-class of gross-substitutes, it follows from \cite{dutting2022combinatorial} that the number of critical values is at most $O(n^2)$, where $n$ refers to the number of vertices in $U$.
\end{proof}

\bibliographystyle{plain}
\bibliography{refs.bib}

\appendix 

\section{Matching-based Rewards and Additive Costs with a Constant-Size Reward Space}\label{apx:ConstantRewardSpace}

This section refers to a small number of different edge rewards, the case of edge costs is symmetric.
\begin{claim}\label{cla:ConstantRewardSpace}
    For matching-based rewards and additive costs,
    if the number of different values $f_{v,u}$ takes is $k = O(1)$, i.e. $|\{x \mid f_{v,u}=x, \quad v \in V, u \in U\}|=k$, then the number of critical values is $O(n^{k+1})$. Where $n$ is the number of edges.
\end{claim}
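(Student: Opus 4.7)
The plan is to reduce bounding the number of critical values to counting the distinct values of $f(S)$ over matchings $S \subseteq A$, using Observation~\ref{obs:critVals}. That observation guarantees that along the sequence of critical values $\alpha_0 < \alpha_1 < \cdots < \alpha_m$, the associated rewards satisfy $f(S_{\alpha_0}) < f(S_{\alpha_1}) < \cdots < f(S_{\alpha_m})$ (part 2 of the observation, using consistent tie-breaking). In particular, the number of critical values is at most one plus the number of distinct values that $f(M)$ attains as $M$ ranges over all matchings contained in $A$. Since the agent's best response for any contract $\alpha$ can be taken to be a matching (as noted at the start of Section~5), this reduction is clean and loses nothing.

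Next, I would exploit the constant-size reward universe. Enumerate the $k$ distinct reward values that appear on edges of $A$ as $r_1,\ldots,r_k$, and for each matching $M \subseteq A$ let $n_i(M)$ denote the number of edges in $M$ whose reward equals $r_i$. Because the matching-based reward evaluated on a matching is simply the sum of its edge rewards, we have $f(M) = \sum_{i=1}^{k} r_i\, n_i(M)$, so $f(M)$ is completely determined by the type vector $(n_1(M),\ldots,n_k(M))$. Each coordinate lies in $\{0,1,\ldots,n\}$ because $|M| \le |A| = n$, so there are at most $(n+1)^k$ distinct type vectors, and hence at most $(n+1)^k = O(n^k) \subseteq O(n^{k+1})$ distinct values of $f(M)$. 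Combining this with the previous paragraph yields the claimed bound on the number of critical values.

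I do not expect a real obstacle here: the argument is purely a counting argument, relying only on the fact that matching-based rewards are additive along the matching and drawn from a constant-size alphabet. The symmetric statement for a constant-size cost space follows by an identical argument, invoking the strict monotonicity of $c(S_{\alpha_i})$ (also guaranteed by Observation~\ref{obs:critVals}) in place of that of $f(S_{\alpha_i})$, and decomposing $c(M)$ by cost-type rather than reward-type.
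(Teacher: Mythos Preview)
Your proof is correct and takes essentially the same approach as the paper: reduce via Observation~\ref{obs:critVals} to counting the distinct values that $f$ can attain on matchings, then bound this count using the multiset of reward values appearing in a matching. The only cosmetic difference is in the counting step---the paper fixes the matching size $j$ and sums the multiset coefficients $\binom{j+k-1}{k-1}$ over $j\le n$, whereas you directly bound the number of type vectors by $(n+1)^k$; both yield $O(n^k)\subseteq O(n^{k+1})$.
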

\begin{proof}
    Recall that by \Cref{obs:critVals}, for any two critical values $\alpha < \alpha'$, the best responses satisfy $f(S_\alpha) < f(S_{\alpha'})$. So the number of critical values is upper bounded by the number of different values $f(S)$ may take, where $S \subseteq A$ is a matching in $G$.
    Since $f(S) = \sum_{(v,u) \in S} f_{v,u}$, 
    for any $j \in \{1,...,n\}$, 
    the number of different values $f(S)$ may take when $|S|=j$ is upper bounded by the number of distinct multisets of size $j$ with elements taken from a set of cardinality $k$.
    This is exactly multiset coefficient, $\multiset{j}{k} = {j+k-1 \choose j}$ \cite{stanley2011enumerative}.
    The total number of different values $f$ may take is thus, 
    $\sum_{j=1}^n {j+k-1 \choose j}$, which one can easily bound with $O(n^{k+1})$.
\end{proof}

\end{document}